\numberwithin{equation}{section}
\theoremstyle{plain}	
\newtheorem{thm}{Theorem}[section]
\newtheorem{prop}[thm]{Proposition}
\newtheorem*{prop*}{Proposition}
\theoremstyle{definition}	
\newtheorem{remark}[thm]{Remark}
\newtheorem{conj}[thm]{Conjecture} 
\DeclareMathAlphabet{\mathpzc}{OT1}{pzc}{m}{it}
\DeclarePairedDelimiter\abs{\lvert}{\rvert}
\newsavebox{\@brx}
\newcommand{\llangle}[1][]{\savebox{\@brx}{\(\m@th{#1\langle}\)}%
  \mathopen{\copy\@brx\mkern2mu\kern-0.9\wd\@brx\usebox{\@brx}}}
\newcommand{\rrangle}[1][]{\savebox{\@brx}{\(\m@th{#1\rangle}\)}%
  \mathclose{\copy\@brx\mkern2mu\kern-0.9\wd\@brx\usebox{\@brx}}}%
\let\oldabs\abs
\def\abs{\@ifstar{\oldabs}{\oldabs*}}
\newcommand{\cloak}[1]{\accentset{\Xi}{#1}}
    \newenvironment{mbmatrix}{\begin{bmatrix}}%
    {\end{bmatrix}}%
\begin{document}

\title{\textbf{Elastodynamic Transformation Cloaking for Non-Centrosymmetric Gradient Solids}}

\author[1]{Fabio Sozio}
\author[2]{Ashkan Golgoon}
\author[1,3]{Arash Yavari\thanks{Corresponding author, e-mail: arash.yavari@ce.gatech.edu}}
\affil[1]{\small \textit{School of Civil and Environmental Engineering, Georgia Institute of Technology, Atlanta, GA 30332, USA}}
\affil[2]{\small \textit{Department of Mechanical Engineering, Northwestern University, Evanston, IL 60208, USA}}
\affil[3]{\small \textit{The George W. Woodruff School of Mechanical Engineering, Georgia Institute of Technology, Atlanta, GA 30332, USA}}

\maketitle

\begin{abstract} 
In this paper we investigate the possibility of elastodynamic transformation cloaking in bodies made of non-centrosymmetric gradient solids. The goal of transformation cloaking is to hide a hole from elastic disturbances in the sense that the mechanical response of a homogeneous and isotropic body with a hole covered by a cloak would be identical to that of the corresponding homogeneous and isotropic body outside the cloak. It is known that in the case of centrosymmetric gradient solids the balance of angular momentum is the obstruction to transformation cloaking. We will show that this is the case for non-centrosymmetric gradient solids as well.
\end{abstract}

\begin{description}
\item[Keywords:] Cloaking, Gradient Elasticity, Cosserat Elasticity, Elastic Waves, Non-Centrosymmetric Solids, Chiral Solids.
\end{description}


\section{Introduction}

The idea of transformation cloaking in electromagnetism goes back to the works of \citet{Pendry2006} and \citet{Leonhardt2006}. Many researchers have tried to use the idea of transformation cloaking in other fields. In the case of elastodynamics, this has led to many inconsistent formulations that were critically reviewed in \citep{Yavari2019} and \citep{Golgoon2020}.
We should emphasize that the ideas related to elastodynamic cloaking are much older and go back to the works \citep{Gurney1938,Reissner1949,Mansfield1953} on reinforced holes in elastic sheets, and \citep{Hashin1962,Hashin1963,Hashin1985,Hashin1964,Benveniste2003} on neutral inhomogeneities.

Cloaking a hole in an elastic body can be formulated in terms of two equivalent boundary-value problems \citep{Yavari2019}. The hole is covered by a cloak that needs to be designed. The cloak is expected to have inhomogeneous mass density and inhomogeneous and anisotropic elastic properties. Outside the cloak the response of the body (the physical body $\mathcal{B}$) is required to be identical to that of a homogeneous and isotropic body with an infinitesimal hole (the virtual body $\tilde{\mathcal{B}}$). The two bodies are under the same external loads and have the same boundary conditions outside the cloak. In transformation cloaking one uses a map $\Xi:\mathcal{B}\rightarrow\tilde{\mathcal{B}}$ (cloaking map) that has two properties: i) Outside the cloak $\mathcal{C}$ it is the identity map, and ii) while fixing the outer boundary of the cloak it shrinks its inner boundary to a very small hole. Starting from the balance of linear momentum in one configuration one transforms it using the Piola transform to the other configuration. This gives transformation relations for the mass density and the elastic constants assuming that the displacement fields in the two configurations are equal at the corresponding points. In order to have identical mechanical responses outside the cloak, the cloaking map needs to fix the outer boundary of the cloak to the first order;  both $\Xi$, and its derivative map $T\Xi$ must be identity maps on the outer boundary of the cloak. The last thing to check is the balance of angular momentum. In the case of classical linear elasticity, generalized Cosserat elasticity, and centrosymmetric gradient elasticity, starting from a homogeneous and isotropic virtual body in which the balance of angular momentum is satisfied, it turns out that the balance of angular momentum cannot be satisfied in the physical problem unless the cloaking map is the identity map everywhere. In other words, the balance of angular momentum is the obstruction to exact transformation cloaking \citep{Yavari2019}. In the case of elastic plates a set of \emph{cloaking compatibility equations} obstruct transformation cloaking.   

There has been a misconception in the literature that an elastic cloak should be made of a Cosserat solid (see \citep{Yavari2019} for an extensive literature review). In \citep{Yavari2019} it was shown that even in the case of generalized Cosserat solids the balance of angular momentum is still the obstruction to transformation cloaking. No assumption was made on the elastic constants other than objectivity, and positive-definiteness of the elastic energy. This means that transformation cloaking is not possible in either non-centrosymmetric or centrosymmetric generalized Cosserat solids (and consequently Cosserat solids). \citet{Yavari2019} proved the impossibility of exact transformation cloaking for centrosymmetric gradient solids. In this paper we investigate the possibility of transformation cloaking for non-centrosymmetric gradient solids.

Noncentrosymmetric solids can be modeled in the setting of generalized continuum mechanics and have been studied by many researchers \citep{Cheverton1981,Lakes1982,Lakes2001,Sharma2004,liu2012chiral,iecsan2016chiral,Bohmer2020}.
\citet{Papanicolopulos2011} studied chirality in 3D isotropic gradient elasticity under the assumption of small strains. Chirality is controlled by a single material parameter in the fifth-order coupling elasticity tensor.
\citet{auffray2015complete,auffray2017handbook} studied the material symmetries in 2D linear gradient elasticity.
In dimension two, chirality is due to the lack of mirror symmetry, and it affects both the coupling and the second-order elasticity tensors. They showed that there are fourteen symmetry classes, eight of which have isotropic first-order elasticity tensors. 
In an effort to use chirality for cloaking applications, \citet{Nassar2019} considered a sheet made of a classical linear elastic solid connected to an elastic foundation that resists rotations. They called such structures ``polar solids", which is a misleading term; the energy functions they considered are not objective. Also, their cloaking structure construction cannot be generalized to 3D.

This paper is structured as follows. In \S2 we tersely review the governing equations of elastodynamics. In \S3 gradient elasticity, its governing equations, and non-centrosymmetry are discussed. We formulate the problem of transformation cloaking in linearized non-centrosymmetric gradient elasticity in \S4. We prove the impossibility of cloaking for arbitrary cylindrical holes and arbitrary cloaking maps. 
Conclusions are given in \S5.

\section{Nonlinear Elasticity}

\paragraph{Kinematics.}
In nonlinear elasticity, motion is a time-dependent mapping between a reference configuration (or natural configuration) and the ambient space. We write this as $\varphi_t:\mathcal{B}\rightarrow\mathcal{S}$, where $(\mathcal{B},\mathbf{G})$ and $(\mathcal{S},\mathbf{g})$ are the material and the ambient space Riemannian manifolds, respectively \citep{MaHu1983}. Here, $\mathbf{G}$ is the material metric (that allows one to measure distances in a natural stress-free configuration) and $\mathbf{g}$ is the background metric of the ambient space. The Levi-Civita connections associated with the metrics $\mathbf{G}$ and $\mathbf{g}$ are denoted as $\nabla^{\mathbf{G}}$ and $\nabla^{\mathbf{g}}$, respectively. The corresponding Christoffel symbols of $\nabla^{\mathbf{G}}$ and $\nabla^{\mathbf{g}}$ in the local coordinate charts $\{X^A\}$ and $\{x^a\}$ are denoted by $\Gamma^A{}_{BC}$ and $\gamma^a{}_{bc}$, respectively. These can be directly expressed in terms of the metric components as
\begin{equation}
	\gamma^a{}_{bc}=\frac{1}{2}g^{ak}\left(g_{kb,c}+g_{kc,b}-g_{bc,k}\right)\,,\qquad \Gamma^A{}_{BC}
	=\frac{1}{2}G^{AK}\left(G_{KB,C}+G_{KC,B}-G_{BC,K}\right)\,.
\end{equation}
The deformation gradient $\mathbf{F}$ is the tangent map of $\varphi_t$, which is defined as $\mathbf{F}(X,t)=T\varphi_t(X): T_X\mathcal{B}\rightarrow T_{\varphi_t(X)}\mathcal{S}$. The transpose of $\mathbf{F}$ is denoted by $\mathbf{F}^{\mathsf{T}}$, where
\begin{equation}
	\mathbf{F}^{\mathsf{T}}(X,t):T_{\varphi_t(X)}\mathcal{S}\rightarrow T_X\mathcal{B}\,,\qquad 
	\llangle \mathbf{W},\mathbf{F}^{\mathsf{T}}\mathbf{w} \rrangle_{\mathbf{G}}
	=\llangle \mathbf F \mathbf W,\mathbf{w} \rrangle_{\mathbf{g}},
	~~ \forall\, \mathbf{W}\in T_X\mathcal{B},\, \mathbf{w}\in T_{\varphi_t(X)}\mathcal{S}\,.
\end{equation}
In components, $(F^{\mathsf{T}})^A{}_a=G^{AB}F^b{}_Bg_{ab}$.
The right Cauchy-Green deformation tensor is defined as $\mathbf{C}=\mathbf{F}^{\mathsf T}\mathbf{F}:T_X\mathcal{B}\rightarrow T_X\mathcal{B}$, which in components reads $C^A{}_B=F^a{}_LF^b{}_Bg_{ab}G^{AL}$. Note that $\mathbf{C}^\flat=\varphi_t^*\mathbf{g}$.

The material velocity of the motion is the mapping $\mathbf{V}:\mathcal{B}\times\mathbb{R}^+\rightarrow T\mathcal{S}$, where $\mathbf{V}(X,t)\in T_{\varphi_t(X)}\mathcal{S}$, and in components, $V^a(X,t)=\frac{\partial \varphi^a}{\partial t}(X,t)$. 
The material acceleration is a mapping $\mathbf{A}:\mathcal{B}\times\mathbb{R}^+\rightarrow T\mathcal{S}$ defined as $\mathbf{A}(X,t):=D^{\mathbf{g}}_{t}\mathbf{V}(X,t)=\nabla^{\mathbf{g}}_{\mathbf{V}(X,t)}\mathbf{V}(X,t)\in T_{\varphi_t(X)}\mathcal{S}$, where $D^{\mathbf{g}}_{t}$ denotes the covariant derivative along the curve $\varphi_t(X)$ in $\mathcal{S}$. In components, $A^a=\frac{\partial V^a}{\partial t}+\gamma^a{}_{bc}V^bV^c$.

\paragraph{Balance laws.}
The balance of linear momentum in material form reads
\begin{equation}\label{linear-momentum}
    \operatorname{Div}\mathbf{P}+\rho_0\mathbf{B}=\rho_0\mathbf{A},
\end{equation}
where $\mathbf{P}$ is the first Piola-Kirchhoff stress. $\rho_0$, $\mathbf{B}$, and $\mathbf{A}$ are the material mass density, material body force, and material acceleration, respectively. $\operatorname{Div}\mathbf{P}$ has the following coordinate expression
\begin{equation}
	\operatorname{Div}\mathbf{P}=P^{aA}{}_{|A}\frac{\partial}{\partial x^a}
	=\left(\frac{\partial P^{aA}}{\partial X^A}+P^{aB}\Gamma^A{}_{AB} 
	+P^{cA}F^b{}_A\gamma^a{}_{bc}\right)\frac{\partial}{\partial x^a}\,.
\end{equation}
The Jacobian of deformation $J$ relates the deformed and undeformed Riemannian volume elements as $dv(x,\mathbf{g})=JdV(X,\mathbf{G})$, and is written as
\begin{equation}
	J=\sqrt{\frac{\det\mathbf{g}}{\det\mathbf{G}}}\det\mathbf{F}.
\end{equation}
Identifying a material point with its position in the material manifold $X\in\mathcal{B}$, we have $x=\varphi_t(X)$. When the ambient space is Euclidean one defines the material displacement field as $\mathbf{U}=\varphi_t(X)-X$.\footnote{In \S3.1, in linearized gradient elasticity we will use $\mathbf{U}$ for the linearized displacement instead of $\delta\mathbf{U}$.}

Balance of angular momentum in local form reads $\mathbf{F}\mathbf{P}^{\star}=\mathbf{P}\mathbf{F}^{\star}$, where $\mathbf{P}^{\star}$ and $\mathbf{F}^{\star}$ are duals of $\mathbf{P}$ and $\mathbf{F}$, respectively, and are defined as
\begin{equation}
\begin{aligned}
	& \mathbf{F}=F^a{}_A \frac{\partial}{\partial x^a}\otimes dX^A,~~~
	&&\mathbf{F}^{\star}=F^a{}_A dX^A \otimes \frac{\partial}{\partial x^a}\,,\\
	& \mathbf{P}=P^{aA} \frac{\partial}{\partial x^a}\otimes \frac{\partial}{\partial X^A},~~~
	&&\mathbf{P}^{\star}=P^{aA} \frac{\partial}{\partial X^A}\otimes \frac{\partial}{\partial x^a}\,.
\end{aligned}
\end{equation}
Note that $\mathbf{F}^\star:T_{\varphi_t(X)}^*\mathcal{S}\rightarrow T_X^*\mathcal{B}$, where $T_{\varphi_t(X)}^*\mathcal{S}$ and $T_X^*\mathcal{B}$ denote the cotangent spaces of $T_{\varphi_t(X)}\mathcal{S}$ and $T_X\mathcal{B}$, respectively. Balance of angular momentum in components reads $F^a{}_AP^{bA}=F^b{}_AP^{aA}$.

Conservation of mass implies that $\rho dv=\rho_0 dV$ or $\rho J=\rho_0$, where $\rho_o$ and $\rho$ denote the material and spatial mass densities, respectively. In terms of Lie derivatives, conservation of mass can be written as $\mathbf{L}_{\mathbf{v}}\rho=0$ \citep{MaHu1983}.

\section{Gradient Elasticity}

In this section we extend the analysis of \citet{Yavari2019} to non-centrosymmetric solids. We refer the reader to \citep{Yavari2019} for the detailed derivation of the governing equations and the transformed fields.
In gradient elasticity (or strain-gradient elasticity) energy function has the following form \citep{Toupin1964}
\begin{equation}
	W=W(X,\mathbf{F},\nabla\mathbf{F},\mathbf{G},\mathbf{g}\circ\varphi)\,.
\end{equation}
From compatibility equations $F^a{}_{A|B}=F^a{}_{B|A}$ \citep{Yavari2013}.
Material frame indifference (objectivity) implies that \citep{Toupin1964, Yavari2019} $W=\hat{W}(X,C_{AB},C_{AB|C},G_{AB})$.
The first Piola-Kirchhoff stress has the following representation
\begin{equation}\label{Piola-Gradient}
	P^{aA}=g^{ab}\left[\frac{\partial W}{\partial F^b{}_A}
	-\left(\frac{\partial W}{\partial F^b{}_{A|B}}\right)_{|B} \right].
\end{equation}
Hyper-stress is defined as $H_a{}^{AB}=H_a{}^{BA}=\frac{\partial W}{\partial F^a{}_{A|B}}$. 
Traction is written as
\begin{equation} \label{Traction-Toupin}
	T^a=P^{aA}N_A-H^{aAB}{}_{|B}N_A+H^{aAB}\mathfrak{B}_{AB},
\end{equation}
where $\mathfrak{B}_{AB}=\mathfrak{B}_{BA}=-N_{A|B}$ is the second fundamental form of the surface $\partial\mathcal{B}$ embedded in the Euclidean space, and $\mathbf{N}$ is the unit normal vector to $\partial\mathcal{B}$. 
Note that in a stress-free gradient solid both the (total) first Piola-Kirchhoff stress and hyper-stress vanish. 

\subsection{Balance of linear and angular momenta}

In terms of the first Piola-Kirchhoff stress the balance of angular momentum reads
\begin{equation}
	P^{[aA}F^{b]}{}_A+\left(H^{[aAB}F^{b]}{}_A\right)_{|B}=0\,.
\end{equation}
Linearizing the balance of linear momentum about a motion $\mathring{\varphi}$ one obtains $(\delta P^{aA})_{|A}+\rho_0\delta B^a=\rho_0\ddot{U}^a$, where
\begin{equation}
	\delta P^{aA}=\frac{\partial P^{aA}}{\partial F^b{}_B}\delta F^b{}_B
	+\frac{\partial P^{aA}}{\partial F^b{}_{B|C}}\delta F^b{}_{B|C}
	=\mathsf{A}^{aA}{}_b{}^B~U^b{}_{|B}+\mathsf{B}^{aA}{}_b{}^{BC}~U^b{}_{|B|C}	\,,
\end{equation}
where $\delta$ denotes the first variation of a field, and $U^a$ are the components of the linearized displacement field, i.e., $U^a = \delta\varphi^a$, and
\begin{equation}
	\mathsf{A}^{aA}{}_b{}^B=\frac{\partial P^{aA}}{\partial F^b{}_B},~~~
	\mathsf{B}^{aA}{}_b{}^{BC}=\frac{\partial P^{aA}}{\partial F^b{}_{B|C}}\,.
\end{equation}
$\boldsymbol{\mathsf{A}}$ and $\boldsymbol{\mathsf{B}}$ are the dynamic elastic constants \citep{DiVincenzo1986}. Notice that $\mathsf{B}^{aAbBC}=\mathsf{B}^{aAbCB}$.
From $P^{aA}=g^{am}\frac{\partial W}{\partial F^m{}_A}-H^{aAM}{}_{|M}$, one writes
\begin{equation}
\begin{aligned}
	\delta P^{aA} &=g^{am}\frac{\partial^2 W}{\partial F^m{}_A \partial F^n{}_N} \delta F^n{}_N
	+g^{am}\frac{\partial^2 W}{\partial F^m{}_A \partial F^n{}_{N|M}}\delta F^n{}_{N|M}
	-\delta(H^{aAM}{}_{|M})\\
	&=g^{am}\frac{\partial^2 W}{\partial F^m{}_A \partial F^n{}_N}U^n{}_{|N}
	+g^{am}\frac{\partial^2 W}{\partial F^m{}_A \partial F^n{}_{N|M}}U^n{}_{|N|M}
	-(\delta H^{aAM})_{|M}
	\,.
\end{aligned}
\end{equation}
But
\begin{equation} \label{DeltaH}
\begin{aligned}
	\delta H^{aAM} &=\frac{\partial H^{aAM}}{\partial F^c{}_C} \delta F^c{}_C
	+\frac{\partial H^{aAM}}{\partial F^c{}_{C|D}} \delta F^c{}_{C|D}\\
	&=g^{am}\frac{\partial^2 W}{\partial F^c{}_C \partial F^m{}_{A|M}}U^c{}_{|C}
	+g^{am}\frac{\partial^2 W}{\partial F^c{}_{C|D} \partial F^m{}_{A|M}}U^c{}_{|C|D}
	\,.
\end{aligned}
\end{equation}
The static elastic constants are defined as \citep{DiVincenzo1986}
\begin{equation}\label{gr-elas}
	\mathbb{A}_a{}^A{}_b{}^B=\frac{\partial^2 W}{\partial F^a{}_A \partial F^b{}_B},~~~
	\mathbb{B}_a{}^A{}_b{}^{BC}=\frac{\partial^2 W}{\partial F^a{}_{A} \partial F^b{}_{B|C}},~~~
	\mathbb{C}_a{}^{AB}{}_b{}^{CD}=\frac{\partial^2 W}{\partial F^a{}_{A|B} \partial F^b{}_{C|D}}
	\,.
\end{equation}
The static elastic constants have the following symmetries: 
\begin{equation} \label{Symmetries}
\begin{aligned}
	& \mathbb{A}_a{}^A{}_b{}^B=\mathbb{A}_b{}^B{}_a{}^A, \\
	& \mathbb{B}_a{}^A{}_b{}^{BC}=\mathbb{B}_a{}^A{}_b{}^{CB}, \\
	& \mathbb{C}_a{}^{AB}{}_b{}^{CD}=\mathbb{C}_a{}^{BA}{}_b{}^{CD}
	=\mathbb{C}_a{}^{BA}{}_b{}^{DC}=\mathbb{C}_b{}^{DC}{}_a{}^{BA}\,.
\end{aligned}
\end{equation}
Thus, from \eqref{DeltaH} $\delta H^{aAM}=\mathbb{B}_b{}^{BaAM} U^b{}_{|B}+\mathbb{C}_b{}^{BCaAM}U^b{}_{|B|C}$, and hence
\begin{equation}
	(\delta H^{aAM})_{|M}=(\mathbb{B}_b{}^{BaAM} U^b{}_{|B}+\mathbb{C}_b{}^{BCaAM}U^b{}_{|B|C})_{|M}
	\,.
\end{equation}
Therefore
\begin{equation}\label{Elastic-Constants-Gradient}
\begin{aligned}
	\mathsf{A}^{aA}{}_b{}^B &=\mathbb{A}^{aA}{}_b{}^B-\mathbb{B}_b{}^{BaAM}{}_{|M},\\
	\mathsf{B}^{aA}{}_b{}^{BC} &= \mathbb{B}^{aA}{}_b{}^{BC}-\mathbb{B}_b{}^{BaAC}-\mathbb{C}^{aAM}{}_b{}^{BC}{}_{|M}
	\,.
\end{aligned}
\end{equation}
Or equivalently
\begin{equation} \label{Dynamics-Static}
\begin{aligned}
	\mathsf{A}^{aAbB} &=\mathbb{A}^{aAbB}-\mathbb{B}^{bBaAM}{}_{|M},\\
	\mathsf{B}^{aAbBC} &= \mathbb{B}^{aAbBC}-\mathbb{B}^{bBaAC}-\mathbb{C}^{aAMbBC}{}_{|M}
	\,.
\end{aligned}
\end{equation}
In deriving the second relation we ignored the term $U^b{}_{|B|C|M}$ in $\delta H^{aAM}{}_{|M}$ as we are assuming a second-gradient elasticity; displacement derivatives of orders three or higher are neglected. 

When linearized with respect to a stress-free initial configuration, the linearized balance of angular momentum is written as
\begin{equation}\label{Angular-Momentum-Gradient-Linear}
\begin{aligned}
	\mathbb{A}^{[aM}{}_m{}^A\mathring{F}^{b]}{}_M&=0,\\
	\mathbb{B}^{[aM}{}_m{}^{AB}\mathring{F}^{b]}{}_M &=0,
\end{aligned}
\end{equation}
and with an abuse of notation
\begin{equation}\label{Angular-Momentum-Gradient-Linear1}
	\mathbb{A}^{[ab]}{}_m{}^A =0,~~~\mathbb{B}^{[ab]}{}_m{}^{AB}=0.
\end{equation}

\subsection{The coupling elastic constants for isotropic solids}

Materials with non-vanishing coupling elasticity tensors $\mathbb{B}$ are those that are not invariant under inversions. These are called non-centrosymmetric solids.
These materials can be either chiral if they are not invariant under orientation-reversing transformations, or achiral.
According to \citet{Auffray2019}, the symmetry groups for these materials are of Type I (chiral) or of Type III (neither chiral nor centrosymmetric).
A further classification can be done using the property of polarity, i.e., the property of having a single rotational axis of symmetry. Therefore, in summary, non-centrosymmetric materials are divided into four cases: chiral polar, achiral polar, chiral apolar, achiral apolar.
Isotropic non-centrosymmetric solids are the isotropic chiral ones.

For non-centrosymmetric solids the coupling elastic constants do not vanish.
Let us consider the corresponding fifth-order elastic constants in terms of the right Cauchy-Green strain $\mathbf{C}$, namely
\begin{equation}
	\mathbb{L}^{ABCDE}=\frac{\partial^2 W}{\partial C_{AB} \partial C_{CD|E} }.
\end{equation}
$\boldsymbol{\mathbb{L}}$ has the minor $\mathbb{L}^{ABCDE}=\mathbb{L}^{BACDE}=\mathbb{L}^{ABDCE}$, and major $\mathbb{L}^{ABCDE}=\mathbb{L}^{CDEAB}$  symmetries. When the elastic constants are defined with respect to a stress-free initial configuration, one can show that
\begin{equation} \label{L-B-Relation}
	\mathbb{B}^{aAbBC}=
	4 \mathring{F}^{a}{}_{M}\mathring{F}^{b}{}_{N}~ \mathbb{L}^{AMNBC}.
\end{equation}
For an isotropic solid, in Cartesian coordinates one has the following representation for $\boldsymbol{\mathbb{L}}$ \citep{Suiker2000}:
\begin{equation}
\begin{aligned}
	\mathbb{L}^{IJKLM}
	= & \ell_1 \epsilon^{IJK}\delta^{LM} 
	+ \ell_2 \epsilon^{IJL}\delta^{KM}
	+ \ell_3 \epsilon^{IJM}\delta^{KL}
	+ \ell_4 \epsilon^{IKL}\delta^{JM}
	+ \ell_5 \epsilon^{IKM}\delta^{JL} \\
	&+ \ell_6 \epsilon^{ILM}\delta^{JK}
	+ \ell_7 \epsilon^{JKL}\delta^{IM}
	+ \ell_8 \epsilon^{JKM}\delta^{IL}
	+ \ell_9 \epsilon^{JLM}\delta^{IK}
	+ \ell_{10} \epsilon^{KLM}\delta^{IJ},
\end{aligned}
\end{equation}
where $\epsilon^{IJK}$ is the permutation symbol, and $\ell_i$ are elastic constants. From $C_{IJ}=C_{JI}$ we have the minor symmetries $\mathbb{L}^{IJKLM}=\mathbb{L}^{JIKLM}$, and $\mathbb{L}^{IJKLM}=\mathbb{L}^{JILKM}$, which dictate $\ell_1=\ell_2=\ell_3=\ell_4=\ell_7=\ell_{10}=0$.\footnote{%
Once the balance of angular momentum is enforced, both coupling elasticity tensors $\mathbb{B}$ and $\mathbb{L}$ have $108$ independent components in the most general case. In \citep{Yavari2019} it was mentioned that $\mathbb{B}$ has $90$ independent components, which is incorrect. However, this inaccurate statement did not affect any of the results or conclusions of that work.}
Thus
\begin{equation}
	\mathbb{L}^{IJKLM} = 
	\left(\ell_5 \epsilon^{IKM}\delta^{JL} 
	+ \ell_8 \epsilon^{JKM}\delta^{IL}\right)
	+ \left(\ell_6 \epsilon^{ILM}\delta^{JK}
	+ \ell_9 \epsilon^{JLM}\delta^{IK}\right).
\end{equation}
Looking at the contribution of $\boldsymbol{\mathbb{L}}$ to energy, one can see that due to symmetry of the right Cauchy-Green strain only the sum of the remaining four elastic constants $(\ell_5+\ell_8+\ell_6+\ell_9)$ appears in the energy expression. This implies that there is only one elastic constant $b_0$, and
\begin{equation} \label{L-Tensor}
	\mathbb{L}^{IJKLM} = 
	b_0\left(\epsilon^{IKM}\delta^{JL}
	+\epsilon^{JKM}\delta^{IL}
	+\epsilon^{ILM}\delta^{JK}
	+\epsilon^{JLM}\delta^{IK}\right).
\end{equation}
This is consistent with the results of \citet{Dell2009}, \citet{Papanicolopulos2011}, and \citet{Auffray2019}.\footnote{Note that this tensor does not have any major symmetries; the symmetries claimed in Eq.(3.2)$_2$ in \citep{Dell2009} are incorrect. As a matter of fact, from the representation \eqref{L-Tensor} in the isotropic case the coupling elasticity tensor $\mathbb{L}$ has the following major antisymmetry: $\mathbb{L}^{IJKLM} = -\mathbb{L}^{KLIJM}$. From \eqref{L-B-Relation}, $\mathbb{B}$ has the same property in the isotropic case.}
In arbitrary curvilinear coordinates \eqref{L-Tensor} is written as
\begin{equation} 
	\mathbb{L}^{IJKLM} = 
	b_0\left(\varepsilon^{IKM}g^{JL}
	+\varepsilon^{JKM}g^{IL}
	+\varepsilon^{ILM}g^{JK}
	+\varepsilon^{JLM}g^{IK}\right),
\end{equation}
where $\varepsilon^{IJK}=\frac{1}{\sqrt{g}}\epsilon^{IJK}$, and $g=\operatorname{det}\mathbf{g}$.

\subsection{Positive-definiteness of the elastic energy}

Starting from a stress-free initial configuration, the change in the elastic energy is written as
\begin{equation}
\begin{aligned}
\delta W &=\frac{1}{2}\frac{\partial^2 W}{\partial F^a{}_A\partial F^b{}_B}U^a{}_{|A}U^b{}_{|B}
+\frac{\partial^2 W}{\partial F^a{}_A\partial F^b{}_{B|C}}U^a{}_{|A}U^b{}_{|B|C}
+\frac{1}{2}\frac{\partial^2 W}{\partial F^a{}_{A|B}\partial F^b{}_{C|D}}U^a{}_{|A|B}U^b{}_{|C|D} \\
& = \frac{1}{2}\mathbb{A}^{aAbB}U_{a|A}U_{b|B}
+\mathbb{B}^{aAbBC}U_{a|A}U_{b|B|C}
+\frac{1}{2}\mathbb{C}^{aABbCD}U_{a|A|B}U_{b|C|D}.
\end{aligned}
\end{equation}
Positive-definiteness of the elastic energy requires that $\delta W>0$ for any pair $(U_{a|A},U_{a|A|B})\neq(0,0)$. In particular, when $U_{a|A}\neq 0$, and $U_{a|A|B}=0$, $\mathbb{A}^{aAbB}U_{a|A}U_{b|B}>0$, which implies that $\boldsymbol{\mathbb{A}}$ must be positive-definite. In the case of isotropic solids this is equivalent to $\mu>0$, and $3\lambda+2\mu>0$. Similarly, $\boldsymbol{\mathbb{C}}$ must be positive-definite. It turns out that $-k<b_0<k$, where $k$ depends on $\mu$ and two sixth-order elastic constants \citep{Papanicolopulos2011}.

\section{Transformation Cloaking in Linearized Gradient Elastodynamics}

\subsection{Shifters in Euclidean ambient space}

It is assumed that the reference configurations of both the physical and virtual bodies are embedded in the Euclidean space. In order to relate vector fields in the physical problem to those in the virtual problem one uses shifters. 
We assume that $\mathcal{B}\subset\mathcal{S}=\mathbb{R}^n$ ($n=2$ or $3$). 
The shifter map is the map $\boldsymbol{\mathsf{s}}:T\mathcal{S}\rightarrow T\mathcal{S}$, $\boldsymbol{\mathsf{s}}(x,\mathbf{w})=(\tilde{x},\mathbf{w})$. 
Its restriction to $x\in\mathcal{S}$ is denoted by $\boldsymbol{\mathsf{s}}_x=\boldsymbol{\mathsf{s}}(x):T_x\mathcal{S}\rightarrow T_{\tilde{x}}\mathcal{S}$, and parallel transports $\mathbf{w}$ based at $x\in\mathcal{S}$ to $\mathbf{w}$ based at $\tilde{x}\in\mathcal{S}$ (see Fig. \ref{fig:shifters}). 
Let us choose two global colinear Cartesian coordinates $\{\tilde{z}^{\tilde{i}}\}$ and $\{z^i\}$ for the virtual and physical deformed configurations in the ambient space. Also consider curvilinear coordinates $\{\tilde{x}^{\tilde{a}}\}$ and $\{x^a\}$ for the two configurations.
Noting that $\mathsf{s}^{\tilde{i}}{}_i=\delta^{\tilde{i}}_i$, one can show that \citep{MaHu1983}
\begin{equation}
	\mathsf{s}^{\tilde{a}}{}_a(x)=\frac{\partial \tilde{x}^{\tilde{a}}}{\partial \tilde{z}^{\tilde{i}}}(\tilde{x})
	\frac{\partial z^i}{\partial x^a}(x)	\delta^{\tilde{i}}_i \,.
\end{equation}
As an example, in the cylindrical coordinates $(r,\theta,z)$ and $(\tilde{r},\tilde{\theta},\tilde{z})$ at $x\in\mathbb{R}^3$ and $\tilde{x}\in\mathbb{R}^3$, respectively, one can show that the shifter map has the following matrix representation
\begin{equation} \label{s-cylindrical}
	\boldsymbol{\mathsf{s}}=
\begin{bmatrix} 
  \cos(\tilde{\theta}-\theta) &  r\sin(\tilde{\theta}-\theta) & 0  \\
  -\sin(\tilde{\theta}-\theta)/\tilde{r} & r\cos(\tilde{\theta}-\theta)/\tilde{r}  & 0  \\
  0 & 0  &   1
\end{bmatrix} \,.
\end{equation}
Similarly, in the spherical coordinates $(r,\theta,\phi)$ and $(\tilde{r},\tilde{\theta},\tilde{\phi})$ at $x\in\mathbb{R}^3$ and $\tilde{x}\in\mathbb{R}^3$, respectively, the shifter map has the following matrix representation
\begin{equation} \label{s-spherical}
	\boldsymbol{\mathsf{s}}=
\begin{mbmatrix} 
  \cos(\tilde{\phi}-\phi)\sin\tilde{\theta}\sin\theta+\cos\tilde{\theta}\cos\theta &  
  r[\cos(\tilde{\phi}-\phi)\sin\tilde{\theta}\cos\theta-\cos\tilde{\theta}\sin\theta] 
  & r\sin(\tilde{\phi}-\phi)\sin\tilde{\theta}\sin\theta  \\
  [\cos(\tilde{\phi}-\phi)\cos\tilde{\theta}\sin\theta-\sin\tilde{\theta}\cos\theta]/\tilde{r} & 
  r[\cos(\tilde{\phi}-\phi)\cos\tilde{\theta}\cos\theta+\sin\tilde{\theta}\sin\theta]/\tilde{r}  & 
  r\sin(\tilde{\phi}-\phi)\cos\tilde{\theta}\sin\theta/\tilde{r}  \\
  -\sin(\tilde{\phi}-\phi)\sin\theta/(\tilde{r}\sin\tilde{\theta})  & 
  -r\sin(\tilde{\phi}-\phi)\cos\theta/(\tilde{r}\sin\tilde{\theta})
  &  r\cos(\tilde{\phi}-\phi)\sin\theta/(\tilde{r}\sin\tilde{\theta})
\end{mbmatrix} \,.
\end{equation}
\begin{figure}[t]
\centering
\vskip 0.1in
\includegraphics[width=.55\textwidth]{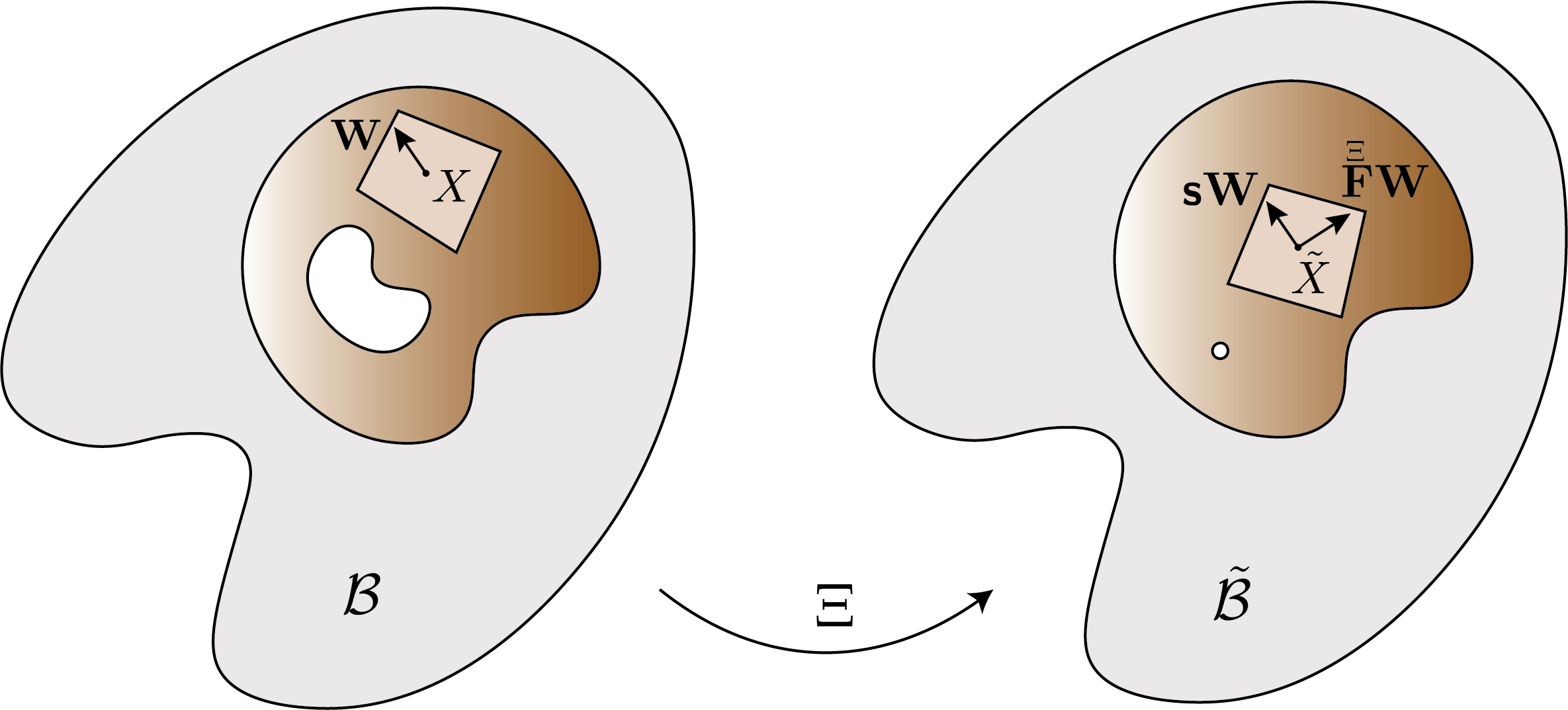}
\vskip 0.1in
\caption{$\Xi:\mathcal{B}\rightarrow\tilde{\mathcal{B}}$ is a map between two submanifolds of $\mathbb{R}^n$. The shifter map $\boldsymbol{\mathsf{s}}$ parallel transports $\mathbf{W}$ at $X$ to $\boldsymbol{\mathsf{s}}\mathbf{W}$ at $\tilde{X}=\Xi(X)$.}
\label{fig:shifters}
\end{figure}
In Fig. \ref{fig:radial-shifters} a radial map $\Xi:\mathcal{B}\rightarrow\tilde{\mathcal{B}}$ is shown. The shifter parallel transports a vector $\mathbf{W}$ at $X=(R,\Theta,Z)$ (or $X=(R,\Theta,\Phi)$) to $(f(R),\Theta,Z)$ (or $(f(R),\Theta,\Phi)$).

\begin{figure}[hbt]
\centering
\vskip 0.2in
\includegraphics[width=.55\textwidth]{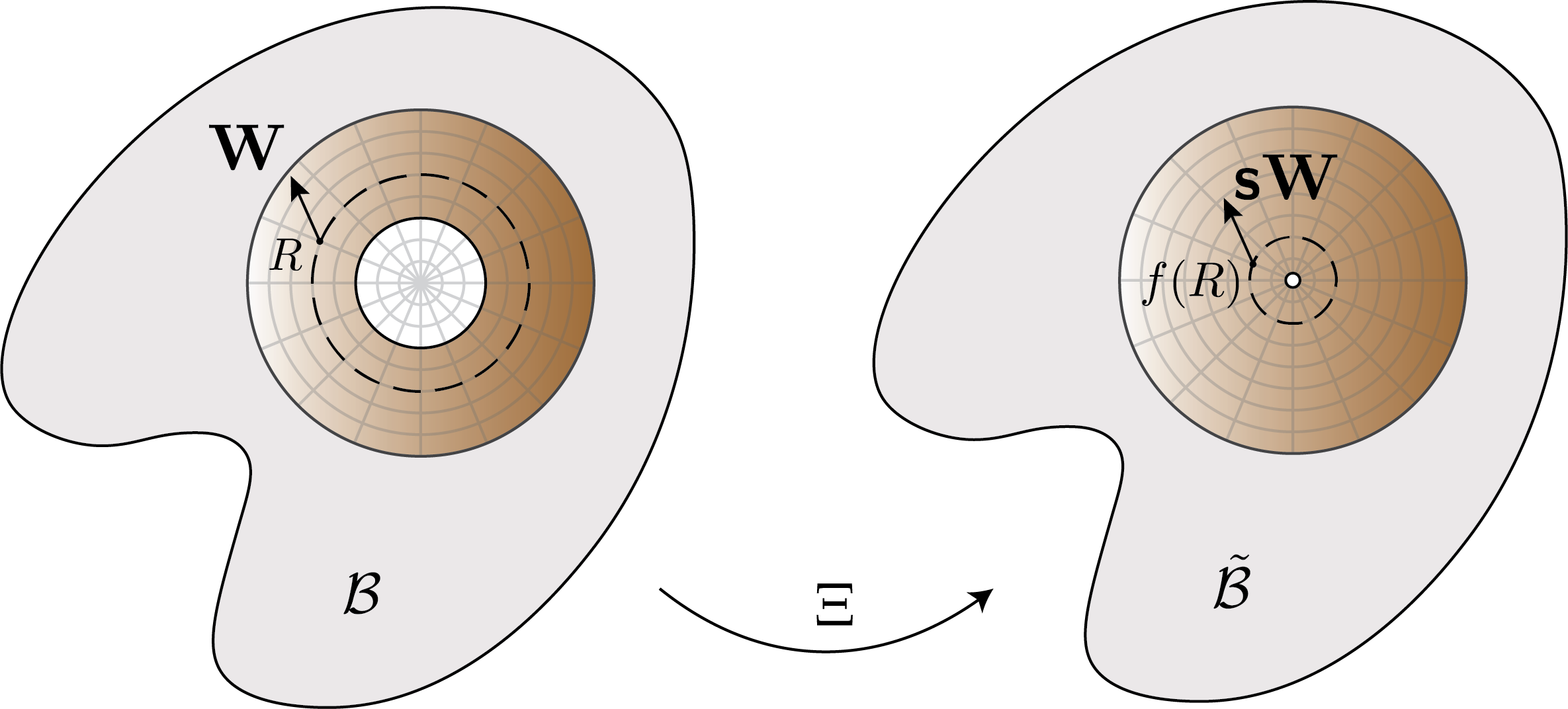}
\vskip 0.1in
\caption{Shifters along a cylindrically/spherically-symmetric map $\Xi:(R,\Theta,Z)\mapsto(f(R),\Theta,Z)$ or $\Xi:(R,\Theta,\Phi)\mapsto(f(R),\Theta,\Phi)$.
The shifter map $\boldsymbol{\mathsf{s}}$ parallel transports $\mathbf{W}$ in the radial direction from $R$ to $\tilde R = f(R)$.}
\label{fig:radial-shifters}
\end{figure}



\subsection{Transformation cloaking formulated as equivalent boundary-value problems}

In the coordinate charts $\{X^A\}$ and $\{x^a\}$, the divergence term in the balance of linear momentum in the physical body $\delta\left(\operatorname{Div}\mathbf{P}\right)+\rho_0\delta\mathbf{B}=\rho_0\mathbf{A}$, has the following component form
\begin{equation}
	\delta\left(\operatorname{Div}\mathbf{P}\right)=\operatorname{Div}\delta\mathbf{P}
	=\operatorname{Div}\left(\boldsymbol{\mathsf{A}}:\nabla\mathbf{U}
	+\boldsymbol{\mathsf{B}}:\nabla\nabla\mathbf{U}\right)
	=\left(\mathsf{A}^{aA}{}_b{}^B~U^b{}_{|B}+\mathsf{B}^{aA}{}_b{}^{BC}~U^b{}_{|B|C}\right)_{|A}
	\frac{\partial}{\partial x^a}.
\end{equation}
Under a cloaking transformation $\Xi:\mathcal{B}\rightarrow\tilde{\mathcal{B}}$ it is transformed to \citep{Yavari2019}
\begin{equation}
	J_{\Xi}\left(\tilde{\mathsf{A}}^{\tilde{a}\tilde{A}}{}_{\tilde{b}}{}^{\tilde{B}}~\tilde{U}^{\tilde{b}}{}_{|\tilde{B}}
	+\tilde{\mathsf{B}}^{\tilde{a}\tilde{A}}{}_{\tilde{b}}{}^{\tilde{B}\tilde{C}}~
	\tilde{U}^{\tilde{b}}{}_{|\tilde{B}|\tilde{C}}\right)_{|\tilde{A}}\frac{\partial}{\partial \tilde{x}^{\tilde{a}}},
\end{equation}
where
\begin{equation} \label{Transformed-A-B1}
\begin{aligned}
	\tilde{U}^{\tilde{a}} &=\mathsf{s}^{\tilde{a}}{}_a U^a\,, \\
	\tilde{\mathsf{A}}^{\tilde{a}\tilde{A}}{}_{\tilde{b}}{}^{\tilde{B}} &=J_{\Xi}^{-1}\mathsf{s}^{\tilde{a}}{}_a
	\cloak{F}^{\tilde{A}}{}_A(\mathsf{s}^{-1})^b{}_{\tilde{b}}\cloak{F}^{\tilde{B}}{}_B~\mathsf{A}^{aA}{}_b{}^B
	+J_{\Xi}^{-1}\mathsf{s}^{\tilde{a}}{}_a	\cloak{F}^{\tilde{A}}{}_A(\mathsf{s}^{-1})^b{}_{\tilde{b}}
	\cloak{F}^{\tilde{B}}{}_{B|C}~\mathsf{B}^{aA}{}_b{}^{BC}\,, \\
	\tilde{\mathsf{B}}^{\tilde{a}\tilde{A}}{}_{\tilde{b}}{}^{\tilde{B}\tilde{C}} &=J_{\Xi}^{-1}\mathsf{s}^{\tilde{a}}{}_a
	\cloak{F}^{\tilde{A}}{}_A(\mathsf{s}^{-1})^b{}_{\tilde{b}}\cloak{F}^{\tilde{B}}{}_B\cloak{F}^{\tilde{C}}{}_C~
	\mathsf{B}^{aA}{}_b{}^{BC}\,.
\end{aligned}
\end{equation}
Note that the material and spatial coordinate charts for the virtual body are denoted by $\{\tilde{X}^{\tilde{A}}\}$, and $\{\tilde{x}^{\tilde{a}}\}$, respectively. 
Equivalently, \eqref{Transformed-A-B1} can be written as
\begin{equation}\label{Transformed-A-B}
\begin{aligned}
	\mathsf{A}^{aA}{}_b{}^B &=J_{\Xi}(\mathsf{s}^{-1})^a{}_{\tilde{a}}
	(\cloak{F}^{-1})^A{}_{\tilde{A}}\mathsf{s}^{\tilde{b}}{}_b(\cloak{F}^{-1})^B{}_{\tilde{B}}
	~\tilde{\mathsf{A}}^{\tilde{a}\tilde{A}}{}_{\tilde{b}}{}^{\tilde{B}}
	+J_{\Xi}(\mathsf{s}^{-1})^a{}_{\tilde{a}}(\cloak{F}^{-1})^A{}_{\tilde{A}}\mathsf{s}^{\tilde{b}}{}_b
	(\cloak{F}^{-1})^B{}_{\tilde{B}|\tilde{C}}~\tilde{\mathsf{B}}^{\tilde{a}\tilde{A}}{}_{\tilde{b}}{}^{\tilde{B}\tilde{C}}\,,
	 \\
	\mathsf{B}^{aA}{}_b{}^{BC} &=J_{\Xi}(\mathsf{s}^{-1})^a{}_{\tilde{a}}
	(\cloak{F}^{-1})^A{}_{\tilde{A}}\mathsf{s}^{\tilde{b}}{}_b(\cloak{F}^{-1})^B{}_{\tilde{B}}
	(\cloak{F}^{-1})^C{}_{\tilde{C}}~
	\tilde{\mathsf{B}}^{\tilde{a}\tilde{A}}{}_{\tilde{b}}{}^{\tilde{B}\tilde{C}}\,.
\end{aligned}
\end{equation}
It is assumed that $\mathring{F}^a{}_A=\delta^a_A$, and $\mathring{\tilde{F}}^{\tilde{a}}{}_{\tilde{A}}=\delta^{\tilde{a}}_{\tilde{A}}$. This implies that $\mathring{F}^a{}_{A|B}=0$, and $\mathring{\tilde{F}}^{\tilde{a}}{}_{\tilde{A}|\tilde{B}}=0$. It is also assumed that there is no initial stress in either configuration, i.e., $\mathring{P}^a{}_A=0$, $\mathring{H}^{aAB}=0$, and $\mathring{\tilde{P}}^{\tilde{a}}{}_{\tilde{A}}=0$, $\mathring{\tilde{H}}^{\tilde{a}\tilde{A}\tilde{B}}=0$.
From \eqref{Angular-Momentum-Gradient-Linear} the balance of angular momentum in the physical and virtual bodies read
\begin{eqnarray}
	\label{Angular-Momentum-Physical} \mathbb{A}^{[aM}{}_m{}^A\mathring{F}^{b]}{}_M=0, &
	~~~\mathbb{B}^{[aM}{}_m{}^{AB}\mathring{F}^{b]}{}_M=0,\\
	 \tilde{\mathbb{A}}^{[\tilde{a}\tilde{M}}{}_{\tilde{m}}{}^{\tilde{A}}
	\label{Angular-Momentum-Virtual} \mathring{\tilde{F}}^{\tilde{b}]}{}_{\tilde{M}}=0, & 
	~~~\tilde{\mathbb{B}}^{[\tilde{a}\tilde{M}}{}_{\tilde{m}}{}^{\tilde{A}\tilde{B}}
	\mathring{\tilde{F}}^{\tilde{b}]}{}_{\tilde{M}}=0.
\end{eqnarray}
For the uniform virtual body, from \eqref{Elastic-Constants-Gradient} one obtains
\begin{equation}
	\tilde{\mathsf{A}}^{\tilde{a}\tilde{A}}{}_{\tilde{b}}{}^{\tilde{B}}
	=\tilde{\mathbb{A}}^{\tilde{a}\tilde{A}}{}_{\tilde{b}}{}^{\tilde{B}},~~~~~~
	\tilde{\mathsf{B}}^{\tilde{a}\tilde{A}}{}_{\tilde{b}}{}^{\tilde{B}\tilde{C}}
	=\tilde{\mathbb{B}}^{\tilde{a}\tilde{A}}{}_{\tilde{b}}{}^{\tilde{B}\tilde{C}}-
	\tilde{\mathbb{B}}_{\tilde{b}}{}^{\tilde{B}\tilde{a}\tilde{A}\tilde{C}}.
\end{equation}
Thus, from \eqref{Angular-Momentum-Virtual}$_1$ one obtains
\begin{equation}
	\tilde{\mathsf{A}}^{[\tilde{a}\tilde{M}}{}_{\tilde{m}}{}^{\tilde{A}}\mathring{\tilde{F}}^{\tilde{b}]}{}_{\tilde{M}}=0,
	~~~\text{or}~~\tilde{\mathsf{A}}^{[\tilde{a}\tilde{b}]}{}_{\tilde{m}}{}^{\tilde{A}}=0.
\end{equation}
As for the physical body, from \eqref{Dynamics-Static}$_1$ we have
\begin{equation} 
	\mathbb{A}^{aAbB} = \mathbb{B}^{bBaAM}{}_{|M} + \mathsf{A}^{aAbB} \,.
\end{equation}
From the above relation and the balance of angular momentum \eqref{Angular-Momentum-Physical}${}_1$ one obtains
\begin{equation} \label{Angular-Momentum-Physical-Dynamic}
	\mathbb{B}^{bB[aA]M}{}_{|M} + \mathsf{A}^{[aA]bB}
	= 0 \,.
\end{equation}
Taking the antisymmetric part of the other pair of indices, i.e., $\mathbb{A}^{[aA][bB]}$, and since from \eqref{Angular-Momentum-Physical}${}_2$ one has $\mathbb{B}^{[bB][aA]C}{}_{|C}=0$, we obtain the following relations:
\begin{equation} \label{Constraint-Compact}
	\mathsf{A}^{[aA][bB]}
	= 0 \,.
\end{equation}
Now we are able to use the transformation \eqref{Transformed-A-B}$_1$. In particular, we make use of its fully contravariant version, viz.
\begin{equation} \label{Transformed-A-B-Contra}
	\mathsf{A}^{aAbB} =J_{\Xi}(\mathsf{s}^{-1})^a{}_{\tilde{a}}
	(\cloak{F}^{-1})^A{}_{\tilde{A}}(\mathsf{s}^{-1})^b{}_{\tilde{b}}\left[(\cloak{F}^{-1})^B{}_{\tilde{B}}
	~\tilde{\mathbb{A}}^{\tilde{a}\tilde{A}\tilde{b}\tilde{B}}+(\cloak{F}^{-1})^B{}_{\tilde{B}|\tilde{C}}
	~\tilde{\mathsf{B}}^{\tilde{a}\tilde{A}\tilde{b}\tilde{B}\tilde{C}}
	\right] \,.
\end{equation}
Note that in order to obtain \eqref{Transformed-A-B-Contra} we used the fact that $g_{\tilde{a}\tilde{b}}\, \mathsf s^{\tilde{b}}{}_b \, g^{ba} = (\mathsf s^{-1})^a{}_{\tilde{a}}$, which in turn comes from the fact that the shifter preserves the ambient space metric. Hence we can write \eqref{Constraint-Compact} as
\begin{equation} \label{Constraint1}
	(\mathsf{s}^{-1})^{[a}{}_{\tilde{a}}
	(\cloak{F}^{-1})^{A]}{}_{\tilde{A}}(\mathsf{s}^{-1})^{[b}{}_{\tilde{b}}
	\left[(\cloak{F}^{-1})^{B]}{}_{\tilde{B}}
	~\tilde{\mathbb{A}}^{\tilde{a}\tilde{A}\tilde{b}\tilde{B}}+(\cloak{F}^{-1})^{B]}{}_{\tilde{B}|\tilde{C}}
	~\tilde{\mathsf{B}}^{\tilde{a}\tilde{A}\tilde{b}\tilde{B}\tilde{C}} \right]=0 \,,
\end{equation}
or in the expanded form
\begin{equation} \label{Constraint2}
\begin{aligned}
	\left[(\mathsf{s}^{-1})^{a}{}_{\tilde{a}}(\cloak{F}^{-1})^{A}{}_{\tilde{A}}
	-(\mathsf{s}^{-1})^{A}{}_{\tilde{a}}(\cloak{F}^{-1})^{a}{}_{\tilde{A}}\right] 
	  & \Big\{ (\mathsf{s}^{-1})^{b}{}_{\tilde{b}}
	\left[(\cloak{F}^{-1})^{B}{}_{\tilde{B}}
	~\tilde{\mathbb{A}}^{\tilde{a}\tilde{A}\tilde{b}\tilde{B}}+(\cloak{F}^{-1})^{B}{}_{\tilde{B}|\tilde{C}}
	~\tilde{\mathsf{B}}^{\tilde{a}\tilde{A}\tilde{b}\tilde{B}\tilde{C}} \right] \\
	& -(\mathsf{s}^{-1})^{B}{}_{\tilde{b}}
	\left[(\cloak{F}^{-1})^{b}{}_{\tilde{B}}
	~\tilde{\mathbb{A}}^{\tilde{a}\tilde{A}\tilde{b}\tilde{B}}+(\cloak{F}^{-1})^{b}{}_{\tilde{B}|\tilde{C}}
	~\tilde{\mathsf{B}}^{\tilde{a}\tilde{A}\tilde{b}\tilde{B}\tilde{C}} \right]
	\Big\}=0\,.
\end{aligned}
\end{equation}
%
Note that \eqref{Constraint1}, i.e., $\mathbb{A}^{[aA][bB]}=0$, consists of six independent equations by virtue of the major symmetry \eqref{Symmetries}$_1$ for $\mathbb{A}$.
Albeit the static constants $\mathbb{A}^{aAbB}$ in the physical body must satisfy this property, it does not come automatically from the transformation~\eqref{Transformed-A-B-Contra}. This is in contrast with classical linearized elasticity \citep{Yavari2019}.
In transformation cloaking for gradient elasticity, the preservation of the balance of linear momentum gives a transformation in terms of the dynamic elastic constants $\mathsf{A}^{aAbB}$, and hence, the major symmetries of static constants $\mathbb{A}^{aAbB}$ for the physical problem are not immediate.
Therefore, the constraints \eqref{Constraint1} consist of nine equations.
Note that enforcing the major symmetry on $\mathbb{A}^{aAbB}$ in the physical body separately would not provide any useful equation besides an identity involving the derivatives of the tensor $\mathbb{B}$.


\begin{remark}
From \eqref{Angular-Momentum-Physical-Dynamic} one has $\mathbb{B}^{bB[aA]C}{}_{|C} = -\mathsf{A}^{[aA]bB}$ and hence from \eqref{Transformed-A-B-Contra}
\begin{equation} \label{B-PDE-constr}
	\mathbb{B}^{bB[aA]C}{}_{|C}
	=-J_{\Xi}(\mathsf{s}^{-1})^{[a}{}_{\tilde{a}}
	(\cloak{F}^{-1})^{A]}{}_{\tilde{A}}(\mathsf{s}^{-1})^b{}_{\tilde{b}}\left[(\cloak{F}^{-1})^B{}_{\tilde{B}}
	~\tilde{\mathbb{A}}^{\tilde{a}\tilde{A}\tilde{b}\tilde{B}}+(\cloak{F}^{-1})^B{}_{\tilde{B}|\tilde{C}}
	~\tilde{\mathsf{B}}^{\tilde{a}\tilde{A}\tilde{b}\tilde{B}\tilde{C}} \right]\,.
\end{equation}
Moreover, taking the divergence of \eqref{Dynamics-Static}$_2$ (applied to the elastic constants in the physical body) with respect to the index $C$, and the antisymmetric part with respect to the pair $aA$, one obtains
\begin{equation} 
	\mathbb{B}^{[aA]bBC}{}_{|C}-\mathbb{B}^{bB[aA]C}{}_{|C}=\mathsf{B}^{[aA]bBC}{}_{|C}+\mathbb{C}^{[aA]MbBC}{}_{|M|C} \,.
\end{equation}
By virtue of the balance of angular momentum in the physical body \eqref{Angular-Momentum-Physical}$_2$, one can then write
\begin{equation}
	-\mathbb{B}^{bB[aA]C}{}_{|C}=\mathsf{B}^{[aA]bBC}{}_{|C}+\mathbb{C}^{[aA]MbBC}{}_{|M|C} \,,
\end{equation}
and from \eqref{Angular-Momentum-Physical-Dynamic}
\begin{equation} \label{C-B-dynamicA}
	\mathbb{C}^{[aA]MbBC}{}_{|M|C} =- \mathsf{B}^{[aA]bBC}{}_{|C} - \mathsf{A}^{[aA]bB} \,.
\end{equation}
Note that from \eqref{Transformed-A-B}$_2$
\begin{equation}
	\mathsf{B}^{aAbBC} =J_{\Xi}(\mathsf{s}^{-1})^a{}_{\tilde{a}}
	(\cloak{F}^{-1})^A{}_{\tilde{A}}(\mathsf{s}^{-1})^b{}_{\tilde{b}}(\cloak{F}^{-1})^B{}_{\tilde{B}}
	(\cloak{F}^{-1})^C{}_{\tilde{C}}~
	\tilde{\mathsf{B}}^{\tilde{a}\tilde{A}\tilde{b}\tilde{B}\tilde{C}} \,,
\end{equation}
and therefore \eqref{C-B-dynamicA} becomes
\begin{equation} \label{C-PDE-constr}
\begin{aligned}
	\mathbb{C}^{[aA]NbBM}{}_{|M|N} =
	& \left[ J_{\Xi}(\mathsf{s}^{-1})^{[a}{}_{\tilde{a}}
	(\cloak{F}^{-1})^{A]}{}_{\tilde{A}}(\mathsf{s}^{-1})^b{}_{\tilde{b}}(\cloak{F}^{-1})^B{}_{\tilde{B}}
	(\cloak{F}^{-1})^M{}_{\tilde{M}}~
	\tilde{\mathsf{B}}^{\tilde{a}\tilde{A}\tilde{b}\tilde{B}\tilde{M}}  \right]_{|M}\\
	& +J_{\Xi}(\mathsf{s}^{-1})^{[a}{}_{\tilde{a}}
	(\cloak{F}^{-1})^{A]}{}_{\tilde{A}}(\mathsf{s}^{-1})^b{}_{\tilde{b}}\left[(\cloak{F}^{-1})^B{}_{\tilde{B}}
	~\tilde{\mathbb{A}}^{\tilde{a}\tilde{A}\tilde{b}\tilde{B}}+(\cloak{F}^{-1})^B{}_{\tilde{B}|\tilde{C}}
	~\tilde{\mathsf{B}}^{\tilde{a}\tilde{A}\tilde{b}\tilde{B}\tilde{C}}
	\right].
\end{aligned}
\end{equation}
Eqs. \eqref{B-PDE-constr} and \eqref{C-PDE-constr} represent differential constraints for $\mathbb{B}^{aAbBC}$ and $\mathbb{C}^{aANbBM}{}_{|M|N}$, respectively, and are a consequence of the balance of angular momentum.
\end{remark}

Next, we assume that the virtual body is isotropic and non-centrosymmetric. Knowing that $\mathring{\tilde{F}}^{\tilde{a}}{}_{\tilde{M}}=\delta^{\tilde{a}}_{\tilde{M}}$, with an abuse of notation from \eqref{L-B-Relation} one can write
\begin{equation}
	\tilde{\mathbb{B}}^{\tilde{a}\tilde{A}\tilde{b}\tilde{B}\tilde{C}}=
	4\tilde{\mathbb{L}}^{\tilde{A}\tilde{a}\tilde{b}\tilde{B}\tilde{C}}=4\tilde{\mathbb{L}}^{\tilde{a}\tilde{A}\tilde{b}\tilde{B}\tilde{C}}.
\end{equation}
Therefore, for the isotropic virtual body in Cartesian coordinates
\begin{equation} \label{B-Cartesian}
	\tilde{\mathbb{B}}^{\tilde{a}\tilde{A}\tilde{b}\tilde{B}\tilde{C}}=
	4\tilde{b}_0\left(\epsilon^{\tilde{a}\tilde{b}\tilde{C}}\delta^{\tilde{A}\tilde{B}}
	+\epsilon^{\tilde{A}\tilde{b}\tilde{C}}\delta^{\tilde{a}\tilde{B}}
	+\epsilon^{\tilde{a}\tilde{B}\tilde{C}}\delta^{\tilde{A}\tilde{b}}
	+\epsilon^{\tilde{A}\tilde{B}\tilde{C}}\delta^{\tilde{a}\tilde{b}}\right).
\end{equation}
In arbitrary curvilinear coordinates, one has
\begin{equation} \label{B-curvilinear}
	 \tilde{\mathbb{B}}^{\tilde{a}\tilde{A}\tilde{b}\tilde{B}\tilde{C}} =
	4\tilde{b}_0\left(\varepsilon^{\tilde{a}\tilde{b}\tilde{C}}g^{\tilde{A}\tilde{B}}
	+\varepsilon^{\tilde{A}\tilde{b}\tilde{C}}g^{\tilde{a}\tilde{B}}
	+\varepsilon^{\tilde{a}\tilde{B}\tilde{C}}g^{\tilde{A}\tilde{b}}
	+\varepsilon^{\tilde{A}\tilde{B}\tilde{C}}g^{\tilde{a}\tilde{b}}\right),
\end{equation}
where $\varepsilon^{\tilde{a}\tilde{b}\tilde{c}}=\frac{1}{\sqrt{g}}\epsilon^{\tilde{a}\tilde{b}\tilde{c}}$, and $g=\operatorname{det}\mathbf{g}$.
Note that the balance of angular momentum, i.e., $\tilde{\mathbb{B}}^{[\tilde{a}\tilde{A}]\tilde{b}\tilde{B}\tilde{C}}=0$ is satisfied.
In Cartesian coordinates, since the representation \eqref{B-Cartesian} is such that $\tilde{\mathbb{B}}^{\tilde{a}\tilde{A}\tilde{b}\tilde{B}\tilde{C}} = -\tilde{\mathbb{B}}^{\tilde{b}\tilde{B}\tilde{a}\tilde{A}\tilde{C}}$, the dynamic elastic constants are written as $\tilde{\mathsf{B}}^{\tilde{a}\tilde{A}\tilde{b}\tilde{B}\tilde{C}}=\tilde{\mathbb{B}}^{\tilde{a}\tilde{A}\tilde{b}\tilde{B}\tilde{C}}-\tilde{\mathbb{B}}^{\tilde{b}\tilde{B}\tilde{a}\tilde{A}\tilde{C}} = 2 \tilde{\mathbb{B}}^{\tilde{a}\tilde{A}\tilde{b}\tilde{B}\tilde{C}}$ and hence
\begin{equation}
	 \tilde{\mathsf{B}}^{\tilde{a}\tilde{A}\tilde{b}\tilde{B}\tilde{C}} =
	8\tilde{b}_0\left(\epsilon^{\tilde{a}\tilde{b}\tilde{C}}\delta^{\tilde{A}\tilde{B}}
	+\epsilon^{\tilde{A}\tilde{b}\tilde{C}}\delta^{\tilde{a}\tilde{B}}
	+\epsilon^{\tilde{a}\tilde{B}\tilde{C}}\delta^{\tilde{A}\tilde{b}}
	+\epsilon^{\tilde{A}\tilde{B}\tilde{C}}\delta^{\tilde{a}\tilde{b}}\right).
\end{equation}
In curvilinear coordinates
\begin{equation} \label{B-isotropic}
	 \tilde{\mathsf{B}}^{\tilde{a}\tilde{A}\tilde{b}\tilde{B}\tilde{C}} =
	8\tilde{b}_0\left(\varepsilon^{\tilde{a}\tilde{b}\tilde{C}}g^{\tilde{A}\tilde{B}}
	+\varepsilon^{\tilde{A}\tilde{b}\tilde{C}}g^{\tilde{a}\tilde{B}}
	+\varepsilon^{\tilde{a}\tilde{B}\tilde{C}}g^{\tilde{A}\tilde{b}}
	+\varepsilon^{\tilde{A}\tilde{B}\tilde{C}}g^{\tilde{a}\tilde{b}}\right).
\end{equation}
From the compatibility of $\accentset{\Xi}{\mathbf{F}}$ we know that $(\cloak{F}^{-1})^B{}_{[\tilde{B}|\tilde{C}]}=0$.
Thus, in curvilinear coordinates
\begin{equation} 
	(\cloak{F}^{-1})^B{}_{\tilde{B}|\tilde{C}}~\tilde{\mathsf{B}}^{\tilde{a}\tilde{A}\tilde{b}\tilde{B}\tilde{C}}=
	8\tilde{b}_0(\cloak{F}^{-1})^B{}_{\tilde{B}|\tilde{C}} \left(\varepsilon^{\tilde{a}\tilde{b}\tilde{C}} g^{\tilde{A}\tilde{B}}
	+\varepsilon^{\tilde{A}\tilde{b}\tilde{C}} g^{\tilde{a}\tilde{B}}\right).
\end{equation}
Moreover, with the usual abuse of notation for the indices, in the isotropic case one has the representation
\begin{equation} \label{A-isotropic}
	\mathbb{A}^{\tilde{a}\tilde{A}\tilde{b}\tilde{B}} = \lambda g^{\tilde{a}\tilde{A}} g^{\tilde{b}\tilde{B}}
	+ \mu \left(g^{\tilde{a}\tilde{b}} g^{\tilde{A}\tilde{B}} + g^{\tilde{a}\tilde{B}} g^{\tilde{A}\tilde{b}}\right) \,,
\end{equation}
so that \eqref{Constraint1} becomes
\begin{equation} \label{Constraint-Isotropic}
\begin{aligned}
	(\mathsf{s}^{-1})^{[a}{}_{\tilde{a}}
	(\cloak{F}^{-1})^{A]}{}_{\tilde{A}}(\mathsf{s}^{-1})^{[b}{}_{\tilde{b}} 
	\Big[ &(\cloak{F}^{-1})^{B]}{}_{\tilde{B}}
	\left( \lambda g^{\tilde{a}\tilde{b}} g^{\tilde{b}\tilde{B}}
	+ \mu g^{\tilde{a}\tilde{b}} g^{\tilde{A}\tilde{B}}
	+ \mu g^{\tilde{a}\tilde{B}} g^{\tilde{A}\tilde{b}} \right) \\
	&+8\tilde{b}_0(\cloak{F}^{-1})^B{}_{\tilde{B}|\tilde{C}} \left(\varepsilon^{\tilde{a}\tilde{b}\tilde{C}} g^{\tilde{A}\tilde{B}}
	+\varepsilon^{\tilde{A}\tilde{b}\tilde{C}} g^{\tilde{a}\tilde{B}}\right)
	 \Big]=0 \,.
\end{aligned}
\end{equation}

\subsection{Circular cylindrical and spherical cloaks}

We work with cylindrical $(R,\Theta,Z)$ and spherical $(R,\Theta,\Phi)$ coordinates, with $\Theta$ and $\Phi$ being the azimuthal and polar angles, respectively.
In both cases, the cloaking map is radial, and is represented by a function $\tilde R=f(R)$, so that one has $\cloak{\mathbf{F}}=\operatorname{diag}(f'(R),1,1)$.
As for the shifters, from \eqref{s-cylindrical} one obtains $\boldsymbol{\mathsf{s}}=\operatorname{diag}(1,R/f(R),1)$ for the cylindrical case, while
\eqref{s-spherical} gives the spherical case $\boldsymbol{\mathsf{s}}=\operatorname{diag}(1,R/f(R),R/f(R))$.
The metric tensor in the physical body has the representations $\mathbf{g}=\operatorname{diag}(1,R^2,1)$ and $\mathbf{g}=\operatorname{diag}(1,R^2,R^2\sin^2\Theta)$ in cylindrical and spherical coordinates, respectively, while the ones in the virtual body read $\tilde{\mathbf{g}}=\operatorname{diag}(1,f(R)^2,1)$ and $\tilde{\mathbf{g}}=\operatorname{diag}(1,f(R)^2,f(R)^2\sin^2\Theta)$.
We show that the conditions \eqref{Constraint2} cannot be satisfied for either a circular cylindrical or a spherical cloak when a radial cloaking map is utilized (see Fig. \ref{fig:circular}). Let us expand \eqref{Constraint-Isotropic} for $a=b=1$, and $A=B=3$:
\begin{equation}
\begin{aligned}
	\left[(\mathsf{s}^{-1})^{1}{}_{\tilde{a}}(\cloak{F}^{-1})^{3}{}_{\tilde{A}}
	-(\mathsf{s}^{-1})^{3}{}_{\tilde{a}}(\cloak{F}^{-1})^{1}{}_{\tilde{A}}\right] 
	  & \Big\{ (\mathsf{s}^{-1})^{1}{}_{\tilde{b}}
	\left[(\cloak{F}^{-1})^{3}{}_{\tilde{B}}
	~\tilde{\mathbb{A}}^{\tilde{a}\tilde{A}\tilde{b}\tilde{B}}+(\cloak{F}^{-1})^{3}{}_{\tilde{B}|\tilde{C}}
	~\tilde{\mathsf{B}}^{\tilde{a}\tilde{A}\tilde{b}\tilde{B}\tilde{C}} \right] \\
	& -(\mathsf{s}^{-1})^{3}{}_{\tilde{b}}
	\left[(\cloak{F}^{-1})^{1}{}_{\tilde{B}}
	~\tilde{\mathbb{A}}^{\tilde{a}\tilde{A}\tilde{b}\tilde{B}}+(\cloak{F}^{-1})^{1}{}_{\tilde{B}|\tilde{C}}
	~\tilde{\mathsf{B}}^{\tilde{a}\tilde{A}\tilde{b}\tilde{B}\tilde{C}} \right]
	\Big\}=0\,.
\end{aligned}
\end{equation}
Knowing that in the spherical (or cylindrical)  coordinates and for a radial cloaking map $\boldsymbol{\mathsf{s}}^{-1}$ and $\cloak{\mathbf{F}}^{-1}$ have diagonal representations, the above relation is simplified to read 
\begin{equation}
\begin{aligned}
	\left[(\mathsf{s}^{-1})^{1}{}_{\tilde{a}}(\cloak{F}^{-1})^{3}{}_{\tilde{A}}
	-(\mathsf{s}^{-1})^{3}{}_{\tilde{a}}(\cloak{F}^{-1})^{1}{}_{\tilde{A}}\right] 
	  & \Big\{ (\mathsf{s}^{-1})^{1}{}_{1}
	\left[(\cloak{F}^{-1})^{3}{}_{3}
	~\tilde{\mathbb{A}}^{\tilde{a}\tilde{A}13}+(\cloak{F}^{-1})^{3}{}_{3|\tilde{C}}
	~\tilde{\mathsf{B}}^{\tilde{a}\tilde{A}13\tilde{C}} \right] \\
	& -(\mathsf{s}^{-1})^{3}{}_{3}
	\left[(\cloak{F}^{-1})^{1}{}_{1}
	~\tilde{\mathbb{A}}^{\tilde{a}\tilde{A}31}+(\cloak{F}^{-1})^{1}{}_{1|\tilde{C}}
	~\tilde{\mathsf{B}}^{\tilde{a}\tilde{A}31\tilde{C}} \right]
	\Big\}=0\,.
\end{aligned}
\end{equation}
Hence
\begin{equation}
\begin{aligned}
	(\mathsf{s}^{-1})^{1}{}_{1}(\cloak{F}^{-1})^{3}{}_{3}
	  & \Big\{ (\mathsf{s}^{-1})^{1}{}_{1}
	\left[(\cloak{F}^{-1})^{3}{}_{3}
	~\tilde{\mathbb{A}}^{1313}+(\cloak{F}^{-1})^{3}{}_{3|\tilde{C}}
	~\tilde{\mathsf{B}}^{1313\tilde{C}} \right] \\
	& -(\mathsf{s}^{-1})^{3}{}_{3}
	\left[(\cloak{F}^{-1})^{1}{}_{1}
	~\tilde{\mathbb{A}}^{1331}+(\cloak{F}^{-1})^{1}{}_{1|\tilde{C}}
	~\tilde{\mathsf{B}}^{1331\tilde{C}} \right]
	\Big\} \\
	 -(\mathsf{s}^{-1})^{3}{}_{3}(\cloak{F}^{-1})^{1}{}_{1}  
	 & \Big\{ (\mathsf{s}^{-1})^{1}{}_{1}
	\left[(\cloak{F}^{-1})^{3}{}_{3}
	~\tilde{\mathbb{A}}^{3113}+(\cloak{F}^{-1})^{3}{}_{3|\tilde{C}}
	~\tilde{\mathsf{B}}^{3113\tilde{C}} \right] \\
	& -(\mathsf{s}^{-1})^{3}{}_{3}
	\left[(\cloak{F}^{-1})^{1}{}_{1}
	~\tilde{\mathbb{A}}^{3131}+(\cloak{F}^{-1})^{1}{}_{1|\tilde{C}}
	~\tilde{\mathsf{B}}^{3131\tilde{C}} \right]
	\Big\}
	=0\,.
\end{aligned}
\end{equation}

\begin{figure}[hbt!]
\centering
\vskip 0.1in
\includegraphics[width=0.9\textwidth]{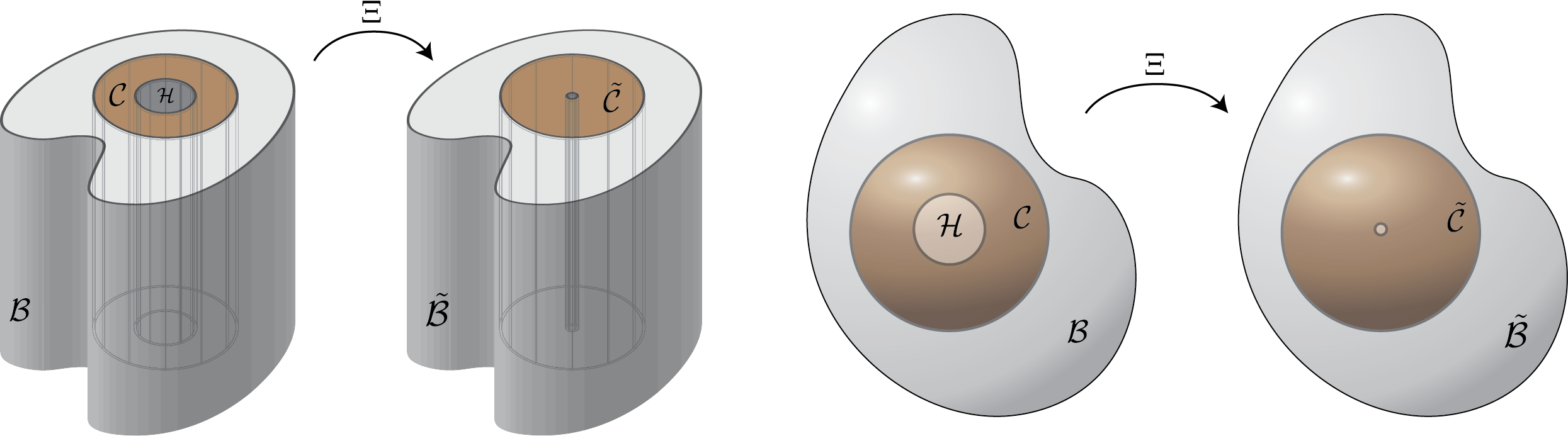}
\vskip 0.1in
\caption{Circular cylindrical (left) and spherical (right) holes and cloaks. The cloaking maps are assumed to be radially symmetric.}
\label{fig:circular}
\end{figure}
%
Note that from \eqref{A-isotropic}, and from the expression of the metric $\tilde{\mathbf{g}}$ in both cylindrical and spherical coordinates, one has $\tilde{\mathbb{A}}^{1313}=\tilde{\mathbb{A}}^{1331}=\tilde{\mathbb{A}}^{3113}=\tilde{\mathbb{A}}^{3131}=\tilde{\mu}/C>0$, with $C=1$ and $C=f(R)^2\sin^2\Theta$ in cylindrical and spherical cloaking, respectively. Moreover, noting that the metric tensor in both cylindrical and spherical coordinates is diagonal, from \eqref{B-curvilinear} one can easy see that $\tilde{\mathsf{B}}^{1313\tilde{C}}=\tilde{\mathsf{B}}^{1331\tilde{C}}=\tilde{\mathsf{B}}^{3113\tilde{C}}=\tilde{\mathsf{B}}^{3131\tilde{C}}=0$. Therefore\footnote{This is identical to the corresponding relation for centrosymmetric gradient solids investigated in \citep{Yavari2019}.}
\begin{equation}
\begin{aligned}
	(\mathsf{s}^{-1})^{1}{}_{1}(\cloak{F}^{-1})^{3}{}_{3}
	  & \left[ (\mathsf{s}^{-1})^{1}{}_{1} (\cloak{F}^{-1})^{3}{}_{3}  
	  -(\mathsf{s}^{-1})^{3}{}_{3} (\cloak{F}^{-1})^{1}{}_{1} \right] \\
	 -(\mathsf{s}^{-1})^{3}{}_{3}(\cloak{F}^{-1})^{1}{}_{1}  
	 & \left[ (\mathsf{s}^{-1})^{1}{}_{1}
	(\cloak{F}^{-1})^{3}{}_{3}-(\mathsf{s}^{-1})^{3}{}_{3} (\cloak{F}^{-1})^{1}{}_{1}  \right]
	=0\,.
\end{aligned}
\end{equation}
Therefore, $(\cloak{F}^{-1})^3{}_{3}(\mathsf{s}^{-1})^1{}_{1}=(\mathsf{s}^{-1})^3{}_{3}(\cloak{F}^{-1})^1{}_{1}$.
As $\cloak{\mathbf{F}}^{-1}=\operatorname{diag}(1/f'(R),1,1)$, and $\boldsymbol{\mathsf{s}}^{-1}=\operatorname{diag}\left(1,f(R)/R,1\right)$ and $\boldsymbol{\mathsf{s}}^{-1}=\operatorname{diag}\left(1,f(R)/R,f(R)/R\right)$, in the cylindrical and spherical coordinates, respectively, one must have $f(R)=R$, i.e., $\Xi=\mathrm{id}$. This means that cloaking is not possible.


\subsection{Spheroidal cloaks}
 
Next we consider prolate and oblate spheroidal holes and consider cloaking maps that respect the sphenoidal symmetry in the sense that they map a spheroid to another confocal spheroid (see Fig. \ref{fig:spheroids}). This will be a generalization of the spherical cloak problem.

\begin{figure}[hbt!]
\centering
\vskip 0.2in
\includegraphics[width=0.9\textwidth]{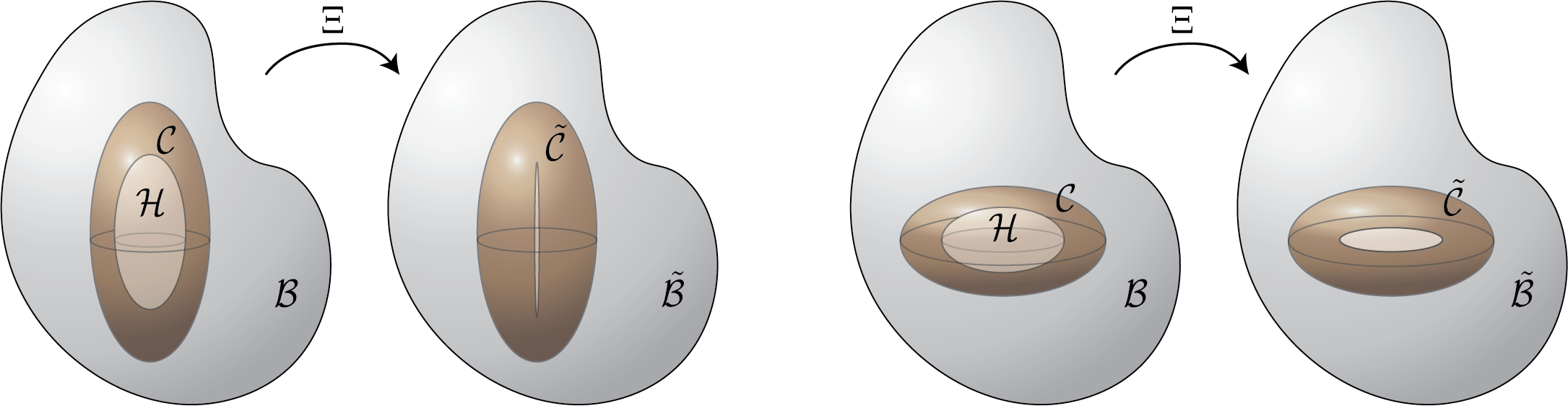}
\vskip 0.1in
\caption{Prolate (left) and oblate (right) spheroidal holes and cloaks. The cloaking maps are assumed to be spheroidally-symmetric. The cloaking map on the left shrinks a spheroidal hole to a needle-like hole. The one on the right shrinks the spheroidal hole to a disk-shaped hole.}
\label{fig:spheroids}
\end{figure}

\begin{prop}
Assuming that the virtual body is isotropic and non-centrosymmetric, elastodynamic transformation cloaking is not possible for either prolate or oblate spheroidal holes using any spheroidally-symmetric cloaking map.
\end{prop}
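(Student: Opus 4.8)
The plan is to mimic the argument just carried out for circular cylindrical and spherical cloaks, but now using a coordinate system adapted to the confocal spheroidal family so that the cloaking map is again a radial reparametrization and the shifter and $\cloak{\mathbf{F}}$ remain diagonal. Concretely, I would use prolate (resp. oblate) spheroidal coordinates $(\xi,\eta,\Phi)$ in which a spheroidal cloaking map acts only on the first coordinate, $\tilde\xi = f(\xi)$, while fixing $\eta$ and the azimuthal angle $\Phi$; then $\cloak{\mathbf{F}} = \operatorname{diag}(f'(\xi),1,1)$, and from the analogues of \eqref{s-cylindrical}--\eqref{s-spherical} one computes the shifter $\boldsymbol{\mathsf{s}}$, which by the rotational and reflection symmetry of the map is again diagonal (with entries determined by the ratios of the spheroidal metric coefficients at $\xi$ and $f(\xi)$). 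The virtual-body metric $\tilde{\mathbf{g}}$ is likewise diagonal, being the spheroidal metric evaluated at $\tilde\xi=f(\xi)$.

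The key steps, in order, are as follows. First, record the diagonal forms of $\mathbf{g}$, $\tilde{\mathbf{g}}$, $\cloak{\mathbf{F}}^{-1}$ and $\boldsymbol{\mathsf{s}}^{-1}$ in the chosen spheroidal coordinates. Second, specialize the constraint \eqref{Constraint-Isotropic} (equivalently \eqref{Constraint2}) to the index choice $a=b$, $A=B$ with the two indices being the ``radial'' coordinate $1$ and the azimuthal coordinate $3$ — exactly as in the spherical case $a=b=1$, $A=B=3$ above. Third, observe that because all four objects are diagonal and $\tilde{\mathbf{g}}$ is diagonal, the isotropic representation \eqref{A-isotropic} gives $\tilde{\mathbb{A}}^{1313}=\tilde{\mathbb{A}}^{1331}=\tilde{\mathbb{A}}^{3113}=\tilde{\mathbb{A}}^{3131}=\tilde\mu/C>0$ for an appropriate positive function $C$, and the isotropic coupling tensor \eqref{B-curvilinear} (built from $\varepsilon$ and a diagonal $g$) gives $\tilde{\mathsf{B}}^{1313\tilde{C}}=\tilde{\mathsf{B}}^{1331\tilde{C}}=\tilde{\mathsf{B}}^{3113\tilde{C}}=\tilde{\mathsf{B}}^{3131\tilde{C}}=0$: the crucial point is that the Levi-Civita symbol kills the ``$131$'' and ``$313$'' type components regardless of how the three values are distributed, so the chirality constant $\tilde b_0$ drops out entirely. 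Fourth, conclude as before that the bracket collapses to $\big[(\mathsf{s}^{-1})^1{}_1(\cloak{F}^{-1})^3{}_3-(\mathsf{s}^{-1})^3{}_3(\cloak{F}^{-1})^1{}_1\big]^2\,\tilde\mu/C=0$, hence $(\cloak{F}^{-1})^3{}_3\,(\mathsf{s}^{-1})^1{}_1=(\mathsf{s}^{-1})^3{}_3\,(\cloak{F}^{-1})^1{}_1$. Fifth, plug in the explicit diagonal entries: $(\cloak{F}^{-1})^3{}_3=1$, $(\cloak{F}^{-1})^1{}_1=1/f'(\xi)$, and the shifter entries $(\mathsf{s}^{-1})^1{}_1$, $(\mathsf{s}^{-1})^3{}_3$ read off from the spheroidal metric ratios; the resulting scalar ODE for $f$ forces $f=\mathrm{id}$, so $\Xi=\mathrm{id}$ and cloaking is impossible.

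The main obstacle — and the only place real work is needed — is Step 1 together with the very end of Step 5: writing down the spheroidal coordinate metric, deriving the correct diagonal shifter (the spheroidal analogue of \eqref{s-cylindrical}--\eqref{s-spherical}), and then checking that the scalar identity $(\cloak{F}^{-1})^3{}_3(\mathsf{s}^{-1})^1{}_1=(\mathsf{s}^{-1})^3{}_3(\cloak{F}^{-1})^1{}_1$ genuinely has $f=\mathrm{id}$ as its only solution once the inner and outer boundary conditions on $f$ (fixing the outer boundary of the cloak to first order) are imposed. I expect the shifter to have the form $\boldsymbol{\mathsf{s}}=\operatorname{diag}(1, h_3(\xi)/h_3(f(\xi)), \dots)$ with $h_3$ the relevant spheroidal metric coefficient, possibly with a nontrivial first entry coming from the $\xi$--$\eta$ block of the spheroidal metric; handling that block carefully (showing the relevant off-diagonal shifter components vanish, or that they do not enter the $a=b=1$, $A=B=3$ constraint) is the subtle point. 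Everything downstream is a verbatim repeat of the spherical computation, with $\tilde b_0$ demonstrably absent, so the impossibility conclusion is robust to non-centrosymmetry exactly as claimed.
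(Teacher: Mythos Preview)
Your overall strategy is exactly the paper's: pass to spheroidal coordinates, take a spheroidally-symmetric cloaking map $\tilde H=f(H)$ so that $\cloak{\mathbf F}=\operatorname{diag}(f'(H),1,1)$, evaluate one component of the constraint \eqref{Constraint-Isotropic}, observe that the $\tilde b_0$ contribution drops out, and obtain a scalar equation forcing $f=\mathrm{id}$. So the plan is sound and matches the paper.

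There is, however, one concrete error. Your claim that ``by the rotational and reflection symmetry of the map'' the shifter is diagonal is false in spheroidal coordinates. Along a prolate-spheroidal ``radial'' move $H\mapsto f(H)$ at fixed $(\Theta,\Phi)$, the coordinate basis vectors $\partial_H$ and $\partial_\Theta$ rotate relative to the Euclidean frame (unlike the spherical case), so $\boldsymbol{\mathsf s}$ has a nontrivial $2\times2$ block in the $(H,\Theta)$ sector; only the azimuthal entry $\mathsf s^3{}_3=\sinh H/\sinh f(H)$ is isolated. You do flag this as ``the subtle point,'' but the fallback you suggest---that the off-diagonal shifter components ``do not enter the $a=b=1$, $A=B=3$ constraint''---is also not right: the $(\mathsf s^{-1})^1{}_2$ entry feeds into that component through the $(\tilde a,\tilde A)=(2,3)$ channel. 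Consequently the computation is \emph{not} a verbatim repeat of the spherical case.

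What the paper actually does is compute the full non-diagonal shifter, and then evaluate the $(a,A,b,B)=(2,3,2,3)$ component of \eqref{Constraint-Isotropic} rather than $(1,3,1,3)$. After the dust settles, the $\tilde b_0$ terms still cancel and one is left with
\[
\frac{\mu\bigl[\coth H-\coth f(H)\bigr]^2}{a^4(\cos 2\Theta-\cosh 2H)\bigl(\cos 2\Theta-\cosh 2f(H)\bigr)}=0
\]
in the prolate case (and the analogous $\tanh$ expression in the oblate case). Note this is an \emph{algebraic} identity, not an ODE as you anticipated: it forces $f(H)=H$ pointwise, with no appeal to the boundary conditions on $f$. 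So your conclusion is correct, but the route to it requires the non-diagonal shifter and a different index choice than you propose.
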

\begin{proof}
Let us consider a prolate spheroidal hole with focal distance $a$. The natural coordinates are the prolate spheroidal coordinates $(H,\Theta,\Phi)$, $0\leq H < \infty,~0\leq\Theta\leq\pi,~0\leq\Theta\leq2\pi$ defined as \citep{Moon2012}
\begin{equation}
\begin{cases}
	X=a\sinh H \sin \Theta \cos \Phi, \\
	Y=a\sinh H \sin \Theta \sin \Phi, \\
	Z=a\cosh H \cos \Theta.
\end{cases}
\end{equation} 
Note that $H = \mathrm{const}$ are prolate spheroids. We consider a cloaking map of the form $(\tilde{H},\tilde{\Theta},\tilde{\Phi})=\Xi(H,\Theta,\Phi)=(f(H),\Theta,\Phi)$. The shifter map reads
\begin{equation}
	\boldsymbol{\mathsf{s}}=
\begin{bmatrix} 
	2\frac{ \sinh H \sinh \tilde{H} \cos ^2\Theta
	+\cosh H \cosh \tilde{H} \sin ^2\Theta }{\cosh 2 \tilde{H}-\cos 2\Theta } 
   & \frac{2 \sin \Theta \cos \Theta \sinh (H-\tilde{H})}{\cosh 2 \tilde{H}-\cos 2\Theta} & 0 \\
 \frac{\sin 2\Theta \sinh (H-\tilde{H})}{\cos 2\Theta-\cosh 2\tilde{H} } &
   2\frac{\sinh H \sinh \tilde{H} \cos^2\Theta+\cosh H \cosh \tilde{H}  \sin^2\Theta }
   {\cosh 2 \tilde{H}-\cos 2\Theta } & 0 \\
 0 & 0 & \sinh H \operatorname{csch}\tilde{H}
\end{bmatrix} \,.
\end{equation}
The spatial metric has the following representation
\begin{equation}
\begin{aligned}
	& \mathbf{g}=
\begin{bmatrix} 
	a^2(\sinh^2H+\sin^2\Theta) & 0 & 0 \\
	0 & a^2(\sinh^2H+\sin^2\Theta) & 0  \\
        0 & 0 & a^2\sinh^2H\sin^2\Theta
\end{bmatrix}  \,.
\end{aligned}
\end{equation}
The cloaking derivative map has the coordinate representation
\begin{equation}
 \cloak{\mathbf{F}}=
\begin{bmatrix} 
	f'(H) & 0 & 0 \\
	0 & 1 & 0  \\
        0 & 0 & 1
\end{bmatrix}  \,.
\end{equation}
The $(a,A,b,B)=(2,3,2,3)$ component of the constraint \eqref{Constraint-Isotropic} reads
\begin{equation}
	\frac{\mu\left[\coth H-\coth(f(H))\right]^2}
	{a^4 (\cos 2\Theta-\cosh 2H) \left[\cos 2\Theta-\cosh(2 f(H))\right]}=0 \,.
\end{equation}
Therefore, $\coth(f(H))=\coth H$, or $f(H)=H$, i.e., cloaking is not possible.
 
In the case of an oblate spheroidal hole, one uses the oblate spheroidal coordinates $(H,\Theta,\Phi)$, $0\leq H < \infty,~0\leq\Theta\leq\pi,~0\leq\Theta\leq2\pi$ defined as \citep{Moon2012}
\begin{equation}
\begin{cases}
	X=a\cosh H \sin \Theta \cos \Phi, \\
	Y=a\cosh H \sin \Theta \sin \Phi, \\
	Z=a\sinh H \cos \Theta.
\end{cases}
\end{equation} 
The spatial metric has the following representation
\begin{equation}
\begin{aligned}
	& \mathbf{g}=
\begin{bmatrix} 
	a^2(\cosh^2H-\sin^2\Theta) & 0 & 0 \\
	0 & a^2(\cosh^2H+\sin^2\Theta) & 0  \\
        0 & 0 & a^2\cosh^2H\sin^2\Theta
\end{bmatrix}  \,.
\end{aligned}
\end{equation}
We again consider a cloaking map of the form $(\tilde{H},\tilde{\Theta},\tilde{\Phi})=\Xi(H,\Theta,\Phi)=(f(H),\Theta,\Phi)$.
In this case, the $(a,A,b,B)=(2,3,2,3)$ component of the constraint \eqref{Constraint-Isotropic} reads
\begin{equation}
	\frac{\mu\left[\tanh H-\tanh(f(H))\right]^2}
	{a^4 (\cos 2\Theta+\cosh 2H) \left[\cos 2\Theta+\cosh(2 f(H))\right]}=0 \,.
\end{equation}
Therefore, $\tanh(f(H))=\tanh H$, or $f(H)=H$, i.e., cloaking is not possible.
\end{proof}



\subsection{Non-symmetric cloaks}

Now one may ask whether cloaking would be possible for less symmetric holes and cloaking maps. 
From \eqref{Constraint2} we have
%
%
%
\begin{equation} 
\begin{aligned}
	\left[\delta^{a}_{\tilde{a}}(\cloak{F}^{-1})^{A}{}_{\tilde{A}}
	-\delta^{A}_{\tilde{a}}(\cloak{F}^{-1})^{a}{}_{\tilde{A}}\right] 
	  & \Big\{\delta^{b}_{\tilde{b}}\left[(\cloak{F}^{-1})^{B}{}_{\tilde{B}}
	~\tilde{\mathbb{A}}^{\tilde{a}\tilde{A}\tilde{b}\tilde{B}}+(\cloak{F}^{-1})^{B}{}_{\tilde{B}|\tilde{C}}
	~\tilde{\mathsf{B}}^{\tilde{a}\tilde{A}\tilde{b}\tilde{B}\tilde{C}} \right] \\
	& -\delta^{B}_{\tilde{b}}
	\left[(\cloak{F}^{-1})^{b}{}_{\tilde{B}}
	~\tilde{\mathbb{A}}^{\tilde{a}\tilde{A}\tilde{b}\tilde{B}}+(\cloak{F}^{-1})^{b}{}_{\tilde{B}|\tilde{C}}
	~\tilde{\mathsf{B}}^{\tilde{a}\tilde{A}\tilde{b}\tilde{B}\tilde{C}} \right]
	\Big\}=0\,.
\end{aligned}
\end{equation}
Or
\begin{equation} 
\begin{aligned}
	\left[\delta^{a}_{\tilde{a}}(\cloak{F}^{-1})^{A}{}_{\tilde{A}}
	-\delta^{A}_{\tilde{a}}(\cloak{F}^{-1})^{a}{}_{\tilde{A}}\right] 
	  & \Big\{\left[(\cloak{F}^{-1})^{B}{}_{\tilde{B}}
	~\tilde{\mathbb{A}}^{\tilde{a}\tilde{A}b\tilde{B}}+(\cloak{F}^{-1})^{B}{}_{\tilde{B}|\tilde{C}}
	~\tilde{\mathsf{B}}^{\tilde{a}\tilde{A}b\tilde{B}\tilde{C}} \right] \\
	& -\left[(\cloak{F}^{-1})^{b}{}_{\tilde{B}}
	~\tilde{\mathbb{A}}^{\tilde{a}\tilde{A}B\tilde{B}}+(\cloak{F}^{-1})^{b}{}_{\tilde{B}|\tilde{C}}
	~\tilde{\mathsf{B}}^{\tilde{a}\tilde{A}B\tilde{B}\tilde{C}} \right]
	\Big\}=0\,.
\end{aligned}
\end{equation}
Thus
\begin{equation} 
\begin{aligned}
	& \delta^{a}_{\tilde{a}}(\cloak{F}^{-1})^{A}{}_{\tilde{A}}
	\left\{\left[(\cloak{F}^{-1})^{B}{}_{\tilde{B}}
	~\tilde{\mathbb{A}}^{\tilde{a}\tilde{A}b\tilde{B}}+(\cloak{F}^{-1})^{B}{}_{\tilde{B}|\tilde{C}}
	~\tilde{\mathsf{B}}^{\tilde{a}\tilde{A}b\tilde{B}\tilde{C}} \right] 
	-\left[(\cloak{F}^{-1})^{b}{}_{\tilde{B}}
	~\tilde{\mathbb{A}}^{\tilde{a}\tilde{A}B\tilde{B}}+(\cloak{F}^{-1})^{b}{}_{\tilde{B}|\tilde{C}}
	~\tilde{\mathsf{B}}^{\tilde{a}\tilde{A}B\tilde{B}\tilde{C}} \right]
	\right\}\\
	&
	-\delta^{A}_{\tilde{a}}(\cloak{F}^{-1})^{a}{}_{\tilde{A}}
	  \Big\{\left[(\cloak{F}^{-1})^{B}{}_{\tilde{B}}
	~\tilde{\mathbb{A}}^{\tilde{a}\tilde{A}b\tilde{B}}+(\cloak{F}^{-1})^{B}{}_{\tilde{B}|\tilde{C}}
	~\tilde{\mathsf{B}}^{\tilde{a}\tilde{A}b\tilde{B}\tilde{C}} \right] 
	 -\left[(\cloak{F}^{-1})^{b}{}_{\tilde{B}}
	~\tilde{\mathbb{A}}^{\tilde{a}\tilde{A}B\tilde{B}}+(\cloak{F}^{-1})^{b}{}_{\tilde{B}|\tilde{C}}
	~\tilde{\mathsf{B}}^{\tilde{a}\tilde{A}B\tilde{B}\tilde{C}} \right]
	\Big\}
	=0\,.
\end{aligned}
\end{equation}
Hence
\begin{equation} \label{Constraint}
\begin{aligned}
	& (\cloak{F}^{-1})^{A}{}_{\tilde{A}}
	\left\{\left[(\cloak{F}^{-1})^{B}{}_{\tilde{B}}
	~\tilde{\mathbb{A}}^{a\tilde{A}b\tilde{B}}+(\cloak{F}^{-1})^{B}{}_{\tilde{B}|\tilde{C}}
	~\tilde{\mathsf{B}}^{a\tilde{A}b\tilde{B}\tilde{C}} \right] 
	-\left[(\cloak{F}^{-1})^{b}{}_{\tilde{B}}
	~\tilde{\mathbb{A}}^{a\tilde{A}B\tilde{B}}+(\cloak{F}^{-1})^{b}{}_{\tilde{B}|\tilde{C}}
	~\tilde{\mathsf{B}}^{a\tilde{A}B\tilde{B}\tilde{C}} \right]
	\right\}\\
	&
	-(\cloak{F}^{-1})^{a}{}_{\tilde{A}}
	  \Big\{\left[(\cloak{F}^{-1})^{B}{}_{\tilde{B}}
	~\tilde{\mathbb{A}}^{A\tilde{A}b\tilde{B}}+(\cloak{F}^{-1})^{B}{}_{\tilde{B}|\tilde{C}}
	~\tilde{\mathsf{B}}^{A\tilde{A}b\tilde{B}\tilde{C}} \right] 
	 -\left[(\cloak{F}^{-1})^{b}{}_{\tilde{B}}
	~\tilde{\mathbb{A}}^{A\tilde{A}B\tilde{B}}+(\cloak{F}^{-1})^{b}{}_{\tilde{B}|\tilde{C}}
	~\tilde{\mathsf{B}}^{A\tilde{A}B\tilde{B}\tilde{C}} \right]
	\Big\}
	=0\,.
\end{aligned}
\end{equation}
Note that if either $a=A$ or $b=B$ the above relations are trivial. We assume that $a\neq A$, and $b\neq B$. Let us consider an arbitrary cloaking transformation whose inverse derivative map $\cloak{\mathbf{F}}$ has the following representation in Cartesian coordinates
\begin{equation}
\cloak{\mathbf{F}}^{-1}=
\begin{bmatrix}
\mathsf{a}_{11} & \mathsf{a}_{12}  & \mathsf{a}_{13}  \\
\mathsf{a}_{21} & \mathsf{a}_{22}  & \mathsf{a}_{23}  \\
\mathsf{a}_{31} & \mathsf{a}_{32}  & \mathsf{a}_{33}  
\end{bmatrix}.
\end{equation} 
The covariant derivative of $\cloak{\mathbf{F}}^{-1}$ has the following representation:
\begin{equation}
\nabla\cloak{\mathbf{F}}^{-1}=
\renewcommand*{\arraystretch}{1.2}
\begin{bmatrix}
\begin{bmatrix}
\mathsf{F}_{111} & \mathsf{F}_{121}  & \mathsf{F}_{131}  \\
\mathsf{F}_{211} & \mathsf{F}_{221}  & \mathsf{F}_{231}  \\
\mathsf{F}_{311} & \mathsf{F}_{321}  & \mathsf{F}_{331}  
\end{bmatrix} \\
\begin{bmatrix}
\mathsf{F}_{112} & \mathsf{F}_{122}  & \mathsf{F}_{132}  \\
\mathsf{F}_{212} & \mathsf{F}_{222}  & \mathsf{F}_{232}  \\
\mathsf{F}_{312} & \mathsf{F}_{322}  & \mathsf{F}_{332}  
\end{bmatrix} \\
\begin{bmatrix}
\mathsf{F}_{113} & \mathsf{F}_{123}  & \mathsf{F}_{133}  \\
\mathsf{F}_{213} & \mathsf{F}_{223}  & \mathsf{F}_{233}  \\
\mathsf{F}_{313} & \mathsf{F}_{323}  & \mathsf{F}_{333}  
\end{bmatrix}
\end{bmatrix} .
\end{equation} 
Taking into account the compatibility equations, $\nabla\cloak{\mathbf{F}}^{-1}$ has the following representation:
\begin{equation}
\nabla\cloak{\mathbf{F}}^{-1}=
\renewcommand*{\arraystretch}{1.2}
\begin{bmatrix}
\begin{bmatrix}
\mathsf{F}_{111} & \mathsf{F}_{112}  & \mathsf{F}_{113}  \\
\mathsf{F}_{211} & \mathsf{F}_{212}  & \mathsf{F}_{213}  \\
\mathsf{F}_{311} & \mathsf{F}_{312}  & \mathsf{F}_{313}  
\end{bmatrix} \\
\begin{bmatrix}
\mathsf{F}_{112} & \mathsf{F}_{122}  & \mathsf{F}_{123}  \\
\mathsf{F}_{212} & \mathsf{F}_{222}  & \mathsf{F}_{223}  \\
\mathsf{F}_{312} & \mathsf{F}_{322}  & \mathsf{F}_{323}  
\end{bmatrix} \\
\begin{bmatrix}
\mathsf{F}_{113} & \mathsf{F}_{123}  & \mathsf{F}_{133}  \\
\mathsf{F}_{213} & \mathsf{F}_{223}  & \mathsf{F}_{233}  \\
\mathsf{F}_{313} & \mathsf{F}_{323}  & \mathsf{F}_{333}  
\end{bmatrix}
\end{bmatrix} .
\end{equation} 
As was mentioned earlier, \eqref{Constraint1}, or its equivalent variants \eqref{Constraint2}, \eqref{Constraint-Isotropic} and \eqref{Constraint}, provide a total of $81$ equations, of which only $9$ are independent.
When $\tilde{b}_0=0$ the number of equations reduces to $6$.
After plugging \eqref{A-isotropic} and \eqref{B-isotropic} into \eqref{Constraint}, one obtains the following nine independent equations:\footnote{Symbolic computations were done with Mathematica Version 12.0.0.0, Wolfram Research, Champaign, IL.}
\begin{equation} \label{Equation1}
\begin{aligned}
	& (3 \lambda+2 \mu)(\mathsf{a}_{12}-\mathsf{a}_{21})^2
	+ \mu \left[3(\mathsf{a}_{11}-\mathsf{a}_{22})^2+3(\mathsf{a}_{13}^2+\mathsf{a}_{23}^2)
	+3(\mathsf{a}_{12}+\mathsf{a}_{21})^2+(\mathsf{a}_{12}-\mathsf{a}_{21})^2\right] \\
    & +\beta  \Large[(\mathsf{a}_{11}-\mathsf{a}_{22}) (\mathsf{F}_{123}+\mathsf{F}_{213})
   +2 (\mathsf{a}_{12} \mathsf{F}_{223}- \mathsf{a}_{21} \mathsf{F}_{113}) \\
   & ~~~~ +\mathsf{a}_{23}(\mathsf{F}_{111}-\mathsf{F}_{133}+\mathsf{F}_{212})
   -\mathsf{a}_{13}(\mathsf{F}_{112}+\mathsf{F}_{222}-\mathsf{F}_{233})
   \Large]=0,
\end{aligned}
\end{equation} 
\begin{equation} \label{Equation2}
\begin{aligned}
	& (3 \lambda+2 \mu)(\mathsf{a}_{23}-\mathsf{a}_{32})^2
	+ \mu \left[3(\mathsf{a}_{22}-\mathsf{a}_{33})^2+3(\mathsf{a}_{21}^2+\mathsf{a}_{31}^2)
	+3(\mathsf{a}_{23}+\mathsf{a}_{32})^2+(\mathsf{a}_{23}-\mathsf{a}_{32})^2\right] \\
	&  +\beta \Large[
   (\mathsf{a}_{22}-\mathsf{a}_{33})(\mathsf{F}_{213}+\mathsf{F}_{312})
   +2(\mathsf{a}_{23} \mathsf{F}_{313}-\mathsf{a}_{32} \mathsf{F}_{212}) \\
   & ~~~~+\mathsf{a}_{31}(-\mathsf{F}_{211}+\mathsf{F}_{222}+\mathsf{F}_{323})
   -\mathsf{a}_{21} (\mathsf{F}_{223}-\mathsf{F}_{311}+\mathsf{F}_{333})
   \Large]=0,
\end{aligned}
\end{equation} 
\begin{equation} \label{Equation3}
\begin{aligned}
	& (3 \lambda+2 \mu)(\mathsf{a}_{13}-\mathsf{a}_{31})^2
	+ \mu \left[3(\mathsf{a}_{11}-\mathsf{a}_{33})^2+3(\mathsf{a}_{12}^2+\mathsf{a}_{32}^2)
	+3(\mathsf{a}_{13}+\mathsf{a}_{31})^2+(\mathsf{a}_{13}-\mathsf{a}_{31})^2\right] \\
   & +\beta \Large[(\mathsf{a}_{33}-\mathsf{a}_{11})(\mathsf{F}_{123}+\mathsf{F}_{312})
   +2(\mathsf{a}_{31} \mathsf{F}_{112}-\mathsf{a}_{13} \mathsf{F}_{323}) \\
   &~~~~ +\mathsf{a}_{12}(\mathsf{F}_{113}-\mathsf{F}_{322}+\mathsf{F}_{333})
   -\mathsf{a}_{32}(\mathsf{F}_{111}-\mathsf{F}_{122}+\mathsf{F}_{313})\Large] =0,
\end{aligned}
\end{equation} 
\begin{equation} \label{Equation4}
\begin{aligned}
	& 3 \mu \left[-\mathsf{a}_{11} (\mathsf{a}_{23}+\mathsf{a}_{32})+\mathsf{a}_{12} \mathsf{a}_{13}+2 \mathsf{a}_{21}
   \mathsf{a}_{31}+\mathsf{a}_{22} \mathsf{a}_{32}+\mathsf{a}_{23} \mathsf{a}_{33}\right]
   +3 \lambda (\mathsf{a}_{12}-\mathsf{a}_{21}) (\mathsf{a}_{13}-\mathsf{a}_{31}) \\
   & +\beta \Large[(\mathsf{a}_{11}-\mathsf{a}_{22})(\mathsf{F}_{111}-\mathsf{F}_{122}+\mathsf{F}_{313})
   +\mathsf{a}_{13} (\mathsf{F}_{113}-\mathsf{F}_{322}+\mathsf{F}_{333}) \\
   & ~~~~+2 \mathsf{a}_{12}(\mathsf{F}_{112}+\mathsf{F}_{323})
   +2\mathsf{a}_{21} \mathsf{F}_{112}+\mathsf{a}_{23} (\mathsf{F}_{123}+\mathsf{F}_{312}) \Large] =0,
\end{aligned}
\end{equation} 
\begin{equation} \label{Equation5}
\begin{aligned}
	&  3 \mu \left[-\mathsf{a}_{11} (\mathsf{a}_{23}+\mathsf{a}_{32})+\mathsf{a}_{12} \mathsf{a}_{13}+2 \mathsf{a}_{21}
   \mathsf{a}_{31}+\mathsf{a}_{22} \mathsf{a}_{32}+\mathsf{a}_{23} \mathsf{a}_{33}\right]
   +3 \lambda (\mathsf{a}_{12}-\mathsf{a}_{21}) (\mathsf{a}_{13}-\mathsf{a}_{31}) \\
   &+\beta \Large[(\mathsf{a}_{33}-\mathsf{a}_{11}) (\mathsf{F}_{111}-\mathsf{F}_{133}+\mathsf{F}_{212})-\mathsf{a}_{12}
   (\mathsf{F}_{112}+\mathsf{F}_{222}-\mathsf{F}_{233}) \\
   &~~~~-2 \mathsf{a}_{13} (\mathsf{F}_{113}+\mathsf{F}_{223})-2
   \mathsf{a}_{31} \mathsf{F}_{113}-\mathsf{a}_{32} (\mathsf{F}_{123}+\mathsf{F}_{213}) \Large]  =0,
\end{aligned}
\end{equation} 
\begin{equation} \label{Equation6}
\begin{aligned}
	&  -3 \mu \left[\mathsf{a}_{11} \mathsf{a}_{31}+2 \mathsf{a}_{12} \mathsf{a}_{32}-\mathsf{a}_{22}
   (\mathsf{a}_{13}+\mathsf{a}_{31})+\mathsf{a}_{13} \mathsf{a}_{33}+\mathsf{a}_{21} \mathsf{a}_{23} \right]
   +3 \lambda (\mathsf{a}_{12}-\mathsf{a}_{21}) (\mathsf{a}_{23}-\mathsf{a}_{32}) \\
   & +\beta \Large[(\mathsf{a}_{11}-\mathsf{a}_{22}) (\mathsf{F}_{211}-\mathsf{F}_{222}-\mathsf{F}_{323})
   +\mathsf{a}_{23} (\mathsf{F}_{223}-\mathsf{F}_{311}+\mathsf{F}_{333})\\
   &~~~~+2 \mathsf{a}_{12}\mathsf{F}_{212}+\mathsf{a}_{13} (\mathsf{F}_{213}+\mathsf{F}_{312})+2 \mathsf{a}_{21}
   (\mathsf{F}_{212}+\mathsf{F}_{313})\Large]=0,
\end{aligned}
\end{equation} 
\begin{equation} \label{Equation7}
\begin{aligned}
	&  -3 \mu \left[\mathsf{a}_{11} \mathsf{a}_{31}+2 \mathsf{a}_{12} \mathsf{a}_{32}-\mathsf{a}_{22}
   (\mathsf{a}_{13}+\mathsf{a}_{31})+\mathsf{a}_{13} \mathsf{a}_{33}+\mathsf{a}_{21} \mathsf{a}_{23}\right]
   +3 \lambda (\mathsf{a}_{12}-\mathsf{a}_{21}) (\mathsf{a}_{23}-\mathsf{a}_{32}) \\
   & +\beta \Large[-\mathsf{a}_{21}(\mathsf{F}_{111}-\mathsf{F}_{133}+\mathsf{F}_{212})-(\mathsf{a}_{22}-\mathsf{a}_{33})
   (\mathsf{F}_{112}+\mathsf{F}_{222}-\mathsf{F}_{233}) \\
   &~~~~-2 \mathsf{a}_{23}(\mathsf{F}_{113}+\mathsf{F}_{223})-\mathsf{a}_{31} (\mathsf{F}_{123}+\mathsf{F}_{213})
   -2 \mathsf{a}_{32}
   \mathsf{F}_{223}\Large]=0,
\end{aligned}
\end{equation} 
\begin{equation} \label{Equation8}
\begin{aligned}
	& 3 \mu \left[\mathsf{a}_{11} \mathsf{a}_{21}-\mathsf{a}_{33} (\mathsf{a}_{12}+\mathsf{a}_{21})
	+\mathsf{a}_{12} \mathsf{a}_{22}+2\mathsf{a}_{13} \mathsf{a}_{23}+\mathsf{a}_{31} \mathsf{a}_{32} \right]
   +3 \lambda  (\mathsf{a}_{13}-\mathsf{a}_{31}) (\mathsf{a}_{23}-\mathsf{a}_{32}) \\
   & +\beta \Large[(\mathsf{a}_{33}-\mathsf{a}_{11})(\mathsf{F}_{223}-\mathsf{F}_{311}+\mathsf{F}_{333})
   +\mathsf{a}_{32}(-\mathsf{F}_{211}+\mathsf{F}_{222}+\mathsf{F}_{323}) \\
   &~~~~+\mathsf{a}_{12} (\mathsf{F}_{213}+\mathsf{F}_{312})
   +2\mathsf{a}_{13} \mathsf{F}_{313}+2 \mathsf{a}_{31} (\mathsf{F}_{212}+\mathsf{F}_{313}) \Large] =0,
\end{aligned}
\end{equation} 
\begin{equation} \label{Equation9}
\begin{aligned}
	& 3 \mu \left[\mathsf{a}_{11} \mathsf{a}_{21}-\mathsf{a}_{33} (\mathsf{a}_{12}+\mathsf{a}_{21})
	+\mathsf{a}_{12} \mathsf{a}_{22}+2\mathsf{a}_{13} \mathsf{a}_{23}+\mathsf{a}_{31} \mathsf{a}_{32}\right]
   +3 \lambda  (\mathsf{a}_{13}-\mathsf{a}_{31}) (\mathsf{a}_{23}-\mathsf{a}_{32}) \\
   & +\beta \Large[(\mathsf{a}_{22}-\mathsf{a}_{33})(\mathsf{F}_{113}-\mathsf{F}_{322}+\mathsf{F}_{333})
   -\mathsf{a}_{31}(\mathsf{F}_{111}-\mathsf{F}_{122}+\mathsf{F}_{313}) \\
   &~~~~ -\mathsf{a}_{21}(\mathsf{F}_{123}+\mathsf{F}_{312}) 
   -2 \mathsf{a}_{23} \mathsf{F}_{323} -2 \mathsf{a}_{32} (\mathsf{F}_{112}+\mathsf{F}_{323}) \Large] =0,
\end{aligned}
\end{equation} 
where $\beta=8\tilde{b}_0$.
Note that $\tilde{b}_0^2$ is bounded by a product of $\tilde{\mu}$ and the sixth-order elastic constants \citep{Papanicolopulos2011}. In the above system of PDEs one can only assume that $\beta\neq 0$ as the sixth-order elastic constants do not appear in the constraints \eqref{Constraint}.

Subtracting Eq.\eqref{Equation5} from Eq.\eqref{Equation4}, Eq.\eqref{Equation7} from Eq.\eqref{Equation6}, and Eq.\eqref{Equation9} from Eq.\eqref{Equation8}, and assuming that $\beta\neq 0$ one obtains the following system of PDEs:
\begin{equation}
\begin{aligned}
   & -\mathsf{a}_{11} (\mathsf{F}_{223}-\mathsf{F}_{311}+\mathsf{F}_{333})
   -\mathsf{a}_{22} (\mathsf{F}_{113}-\mathsf{F}_{322}+\mathsf{F}_{333})
   +\mathsf{a}_{33}(\mathsf{F}_{113}+\mathsf{F}_{223}-\mathsf{F}_{311}-\mathsf{F}_{322}+2 \mathsf{F}_{333}) \\
   &  +\mathsf{a}_{31}(\mathsf{F}_{111}-\mathsf{F}_{122}+2 \mathsf{F}_{212}+3 \mathsf{F}_{313})
   +\mathsf{a}_{32} (2\mathsf{F}_{112}-\mathsf{F}_{211}+\mathsf{F}_{222}+3 \mathsf{F}_{323}) \\
   & +\mathsf{a}_{12} (\mathsf{F}_{213}+\mathsf{F}_{312})+\mathsf{a}_{21}(\mathsf{F}_{123}+\mathsf{F}_{312})
   +2(\mathsf{a}_{13} \mathsf{F}_{313}+\mathsf{a}_{23} \mathsf{F}_{323})=0,
\end{aligned}
\end{equation} 
\begin{equation}
\begin{aligned}
  & \mathsf{a}_{11} (\mathsf{F}_{211}-\mathsf{F}_{222}-\mathsf{F}_{323})
  -\mathsf{a}_{33} (\mathsf{F}_{112}+\mathsf{F}_{222}-\mathsf{F}_{233})  
  +\mathsf{a}_{22} (\mathsf{F}_{112}-\mathsf{F}_{211}+2\mathsf{F}_{222}-\mathsf{F}_{233}+\mathsf{F}_{323})  \\
  &  +\mathsf{a}_{21} (\mathsf{F}_{111}-\mathsf{F}_{133}+3 \mathsf{F}_{212}+2\mathsf{F}_{313})  
   +\mathsf{a}_{23} (2 \mathsf{F}_{113}+3\mathsf{F}_{223}-\mathsf{F}_{311}+\mathsf{F}_{333}) \\
  &  +2 \mathsf{a}_{12} \mathsf{F}_{212}+2 \mathsf{a}_{32} \mathsf{F}_{223}
  +\mathsf{a}_{13} (\mathsf{F}_{213}+\mathsf{F}_{312})+\mathsf{a}_{31} (\mathsf{F}_{123}+\mathsf{F}_{213})=0,
\end{aligned}
\end{equation} 
\begin{equation}
\begin{aligned}
& \mathsf{a}_{22}(\mathsf{F}_{111}-\mathsf{F}_{122}+\mathsf{F}_{313})
+\mathsf{a}_{33} (\mathsf{F}_{111}-\mathsf{F}_{133}+\mathsf{F}_{212}) 
+ \mathsf{a}_{11} (-2 \mathsf{F}_{111}+\mathsf{F}_{122}+\mathsf{F}_{133}-\mathsf{F}_{212}-\mathsf{F}_{313}) \\
&-\mathsf{a}_{12} (3\mathsf{F}_{112}+\mathsf{F}_{222}-\mathsf{F}_{233}+2 \mathsf{F}_{323}) 
 -\mathsf{a}_{13} (3 \mathsf{F}_{113}+2\mathsf{F}_{223}-\mathsf{F}_{322}+\mathsf{F}_{333}) \\
&-2 \mathsf{a}_{21} \mathsf{F}_{112}-2\mathsf{a}_{31} \mathsf{F}_{113}
-\mathsf{a}_{23} (\mathsf{F}_{123}+\mathsf{F}_{312})
-\mathsf{a}_{32} (\mathsf{F}_{123}+\mathsf{F}_{213}).
\end{aligned}
\end{equation} 
The above system of nonlinear PDEs are too complicated to solve analytically. However, we can analytically study cloaking an arbitrary cylindrical hole (see Fig. \ref{fig:cylindrical}).

\begin{figure}[hbt!]
\centering
\vskip 0.2in
\includegraphics[width=.5\textwidth]{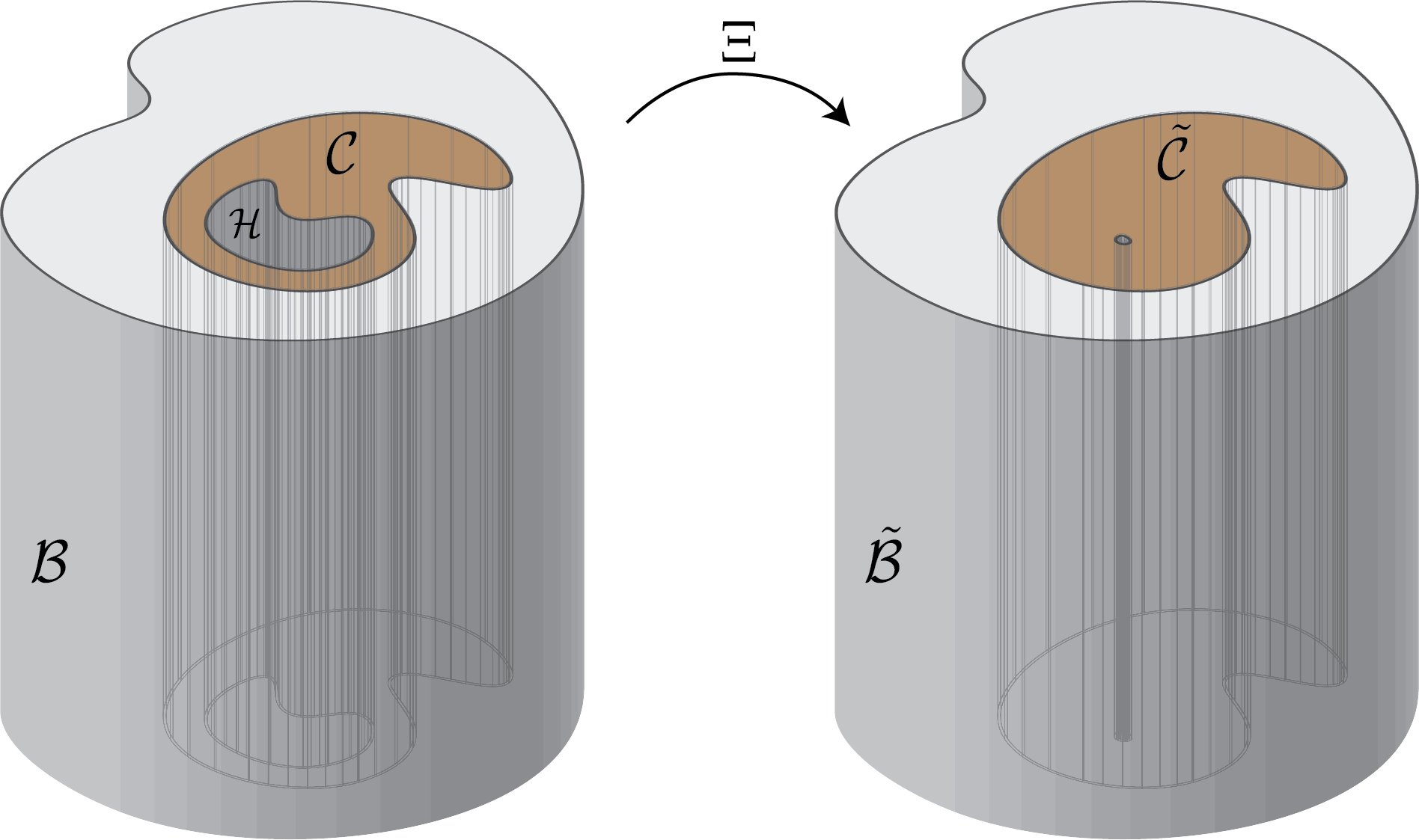}
\vskip 0.1in
\caption{A body with a cylindrical hole.}
\label{fig:cylindrical}
\end{figure}

\begin{prop}
\label{Cylinderical-Cloaking}
Assuming that the virtual body is isotropic and non-centrosymmetric, elastodynamic transformation cloaking is not possible for any cylindrical hole (not necessarily circular).
\end{prop}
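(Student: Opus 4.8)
The plan is to use the cylindrical symmetry to collapse the nine constraints \eqref{Equation1}--\eqref{Equation9}, and then to observe that the surviving classical ($\lambda$- and $\mu$-dependent) terms already force the cloaking map to be trivial. Work in Cartesian coordinates $(X^1,X^2,X^3)$ with the $X^3$-axis parallel to the generators of the cylindrical hole. A cloaking map compatible with the cylindrical symmetry has the form $\Xi(X^1,X^2,X^3)=(\xi^1(X^1,X^2),\xi^2(X^1,X^2),X^3)$, so $\cloak{\mathbf F}$, and hence $\cloak{\mathbf F}^{-1}$, is block diagonal: $\mathsf{a}_{13}=\mathsf{a}_{23}=\mathsf{a}_{31}=\mathsf{a}_{32}=0$, $\mathsf{a}_{33}=1$, and $\mathsf{a}_{11},\mathsf{a}_{12},\mathsf{a}_{21},\mathsf{a}_{22}$ depend only on $X^1,X^2$ (this $X^3$-independence is also a consequence of the block structure together with the compatibility of $\cloak{\mathbf F}^{-1}$). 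It follows that every $\mathsf{F}_{ABC}=(\cloak F^{-1})^A{}_{B|C}$ with an index equal to $3$ vanishes; only the components with $A,B,C\in\{1,2\}$ can be nonzero.

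First I would substitute this ansatz into \eqref{Equation1}. Every term of its $\beta$-bracket either carries an $\mathsf{F}$-symbol with a $3$ (hence zero) or is multiplied by $\mathsf{a}_{13}$ or $\mathsf{a}_{23}$ (hence zero), so the coupling contribution drops out entirely and \eqref{Equation1} reduces to
\[
(3\lambda+2\mu)(\mathsf{a}_{12}-\mathsf{a}_{21})^2
+\mu\left[\,3(\mathsf{a}_{11}-\mathsf{a}_{22})^2+3(\mathsf{a}_{12}+\mathsf{a}_{21})^2+(\mathsf{a}_{12}-\mathsf{a}_{21})^2\,\right]=0 .
\]
As $\mu>0$ and $3\lambda+2\mu>0$ by positive-definiteness of $\boldsymbol{\mathbb A}$ (\S3.3), the left-hand side is a positive-definite quadratic form in $\mathsf{a}_{11}-\mathsf{a}_{22}$, $\mathsf{a}_{12}+\mathsf{a}_{21}$, $\mathsf{a}_{12}-\mathsf{a}_{21}$, so each of these must vanish; hence $\mathsf{a}_{12}=\mathsf{a}_{21}=0$ and $\mathsf{a}_{11}=\mathsf{a}_{22}=:\phi(X^1,X^2)$, i.e.\ $\cloak{\mathbf F}^{-1}=\operatorname{diag}(\phi,\phi,1)$.

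Next I would feed this diagonal form into \eqref{Equation2} (equivalently \eqref{Equation3}). Its $\beta$-bracket again vanishes identically, because each of its $\mathsf{F}$-terms either carries a $3$ or is multiplied by one of the now-vanishing entries $\mathsf{a}_{21},\mathsf{a}_{31},\mathsf{a}_{32}$; this leaves $3\mu(\phi-1)^2=0$, whence $\phi\equiv1$. Therefore $\cloak{\mathbf F}^{-1}=\mathbf I$, so $T\Xi$ is the identity throughout the connected cloak; combined with the requirement that $\Xi$ fix the outer boundary of the cloak, this forces $\Xi=\operatorname{id}$. An identity cloaking map does not shrink the hole, so exact transformation cloaking is impossible. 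The remaining relations \eqref{Equation4}--\eqref{Equation9} are then automatically satisfied and need not be examined.

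The crux is verifying that, under the cylindrical ansatz, the whole coupling contribution ($\beta=8\tilde b_0$) in \eqref{Equation1} and \eqref{Equation2} disappears: this is precisely what makes the obstruction for non-centrosymmetric gradient solids coincide with the one already known for classical and centrosymmetric gradient elasticity. Everything else is elementary, the only additional input being the sign conditions $\mu>0$ and $3\lambda+2\mu>0$ coming from positive-definiteness of the first-order elasticity tensor.
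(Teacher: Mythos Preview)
Your proof is correct and follows essentially the same route as the paper's: specialize \eqref{Equation1} to the cylindrical ansatz so that the $\beta$-bracket drops out and positive-definiteness forces $\mathsf{a}_{12}=\mathsf{a}_{21}=0$, $\mathsf{a}_{11}=\mathsf{a}_{22}$, then specialize \eqref{Equation2} to obtain $3\mu(\mathsf{a}_{22}-1)^2=0$ and hence $\Xi=\operatorname{id}$. Your write-up is slightly more explicit than the paper's in justifying why each $\beta$-term vanishes and in invoking the boundary condition to pass from $T\Xi=\mathbf I$ to $\Xi=\operatorname{id}$, but the argument is the same.
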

\begin{proof}
Let us consider a cylindrical hole (not necessarily circular) that is covered by a cylindrical cloak. Let us assume that in the Cartesian coordinates $(X^1,X^2,X^3)$, the $X_3$ axis is the axis of the cylindrical hole. In this case the cloaking map has the form $\Xi(X^1,X^2,X^3)=(\tilde{X}^1,\tilde{X}^2,\tilde{X}^3)=(\Xi^1(X^1,X^2),\Xi^2(X^1,X^2),X^3)$. Therefore, 
$\cloak{\mathbf{F}}^{-1}$ and its covariant derivative have the following representations:
\begin{equation}
\cloak{\mathbf{F}}^{-1}=
\begin{bmatrix}
\mathsf{a}_{11} & \mathsf{a}_{12}  & 0  \\
\mathsf{a}_{21} & \mathsf{a}_{22}  & 0  \\
0 & 0  & 1  
\end{bmatrix},~~~\nabla\cloak{\mathbf{F}}^{-1}=
\renewcommand*{\arraystretch}{1.2}
\begin{bmatrix}
\begin{bmatrix}
\mathsf{F}_{111} & \mathsf{F}_{112}  & 0  \\
\mathsf{F}_{211} & \mathsf{F}_{212}  & 0  \\
0 & 0  & 0  
\end{bmatrix} \\
\begin{bmatrix}
\mathsf{F}_{112} & \mathsf{F}_{122}  & 0  \\
\mathsf{F}_{212} & \mathsf{F}_{222}  & 0  \\
0 & 0  & 0  
\end{bmatrix} \\
\begin{bmatrix}
0 & 0  & 0  \\
0 & 0  & 0  \\
0 & 0  & 0  
\end{bmatrix}
\end{bmatrix} .
\end{equation} 
Eq.\eqref{Equation1} is simplified to read
\begin{equation}
	(3 \lambda+2 \mu)(\mathsf{a}_{12}-\mathsf{a}_{21})^2+ \mu \left[3(\mathsf{a}_{11}-\mathsf{a}_{22})^2
	+3(\mathsf{a}_{12}+\mathsf{a}_{21})^2+(\mathsf{a}_{12}-\mathsf{a}_{21})^2\right]=0.
\end{equation} 
Knowing that $\mu>0$, and $3 \lambda+2 \mu>0$, one concludes that $\mathsf{a}_{12}=\mathsf{a}_{21}=0$, and $\mathsf{a}_{11}=\mathsf{a}_{22}$. Now, Eq.\eqref{Equation2} is simplified to read $3\mu(\mathsf{a}_{22}-1)^2=0$, which implies that $\mathsf{a}_{22}=1$. The other constraints are trivially satisfied. Therefore, $\Xi=\operatorname{id}$, which implies that cloaking is not possible.
\end{proof}

We suspect that transformation cloaking in dimension three is not possible for a cavity of any shape. 
\begin{conj}
Assuming that the virtual body is isotropic and non-centrosymmetric, elastodynamic transformation cloaking is not possible for a hole of any shape in dimension three.
\end{conj}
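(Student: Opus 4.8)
The natural plan is to push the argument of Proposition~\ref{Cylinderical-Cloaking} to a cavity of arbitrary shape by treating the nine independent constraints \eqref{Equation1}--\eqref{Equation9} as an overdetermined first-order PDE system for the cloaking map and proving a Liouville-type rigidity statement. Working in Cartesian coordinates, write $\cloak{\mathbf{F}}^{-1}=[\mathsf{a}_{ij}]$ and recall that the $\mathsf{F}_{ijk}$ occurring in \eqref{Equation1}--\eqref{Equation9} are \emph{not} independent fields: they are the components of $\nabla\cloak{\mathbf{F}}^{-1}$, so that $\mathsf{F}_{ijk}$ are derivatives of the $\mathsf{a}_{ij}$, and in addition the compatibility of $\cloak{\mathbf{F}}$ forces the symmetry $\mathsf{F}_{ijk}=\mathsf{F}_{ikj}$. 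Thus \eqref{Equation1}--\eqref{Equation9}, together with $\operatorname{curl}\cloak{\mathbf{F}}=0$ and the Cauchy data $\cloak{\mathbf{F}}^{-1}=\mathbf{I}$, $\nabla\cloak{\mathbf{F}}^{-1}=0$ on the outer boundary $\partial_{\mathrm{o}}\mathcal{C}$ of the cloak (which encode that $\Xi$ and $T\Xi$ are the identity there), constitute a closed, heavily overdetermined system, and the claim is that its only solution is $\Xi=\mathrm{id}$.

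First I would dispose of the centrosymmetric case $\beta=\tilde{b}_0=0$, already covered by \citep{Yavari2019}: Eqs.~\eqref{Equation1}--\eqref{Equation3} then reduce to six equations with a manifestly positive-definite quadratic part in the $\mathsf{a}_{ij}$ (using $\mu>0$ and $3\lambda+2\mu>0$), forcing $\mathsf{a}_{12}=\mathsf{a}_{21}=\mathsf{a}_{13}=\mathsf{a}_{31}=\mathsf{a}_{23}=\mathsf{a}_{32}=0$ and $\mathsf{a}_{11}=\mathsf{a}_{22}=\mathsf{a}_{33}=:c$, i.e.\ $\cloak{\mathbf{F}}^{-1}=c\,\mathbf{I}$; the compatibility symmetry applied to $\mathsf{F}_{ijk}=\delta_{ij}\,c_{,k}$ forces $c$ to be constant, and the boundary value $c\equiv1$ gives $\Xi=\mathrm{id}$. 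The remaining case is $\beta\neq0$, and here the plan is to show that the sign-indefinite terms linear in the $\mathsf{F}_{ijk}$ cannot rescue a nontrivial solution. Concretely, I would regard the collection of $\beta$-terms in \eqref{Equation1}--\eqref{Equation9} as a fixed linear operator $\mathcal{L}$ acting on $\nabla\cloak{\mathbf{F}}^{-1}$, with right-hand side equal to $-\beta^{-1}$ times the $(\lambda,\mu)$-quadratic part, a polynomial in the defect $\mathbf{D}:=\cloak{\mathbf{F}}^{-1}-\mathbf{I}$ vanishing to second order at $\mathbf{D}=0$. Combining the structure of $\ker\mathcal{L}$ and $\operatorname{ran}\mathcal{L}$ with the compatibility symmetry $\mathsf{F}_{ijk}=\mathsf{F}_{ikj}$ and the curl-free constraint, I would try to show either (a) that membership in $\operatorname{ran}\mathcal{L}$ forces the quadratic right-hand side to vanish, returning us to the $\beta=0$ analysis, or (b) that the system determines $\nabla\cloak{\mathbf{F}}^{-1}$ as a (real-analytic) function of $\cloak{\mathbf{F}}^{-1}$, i.e.\ a closed first-order system $\nabla\cloak{\mathbf{F}}^{-1}=\mathbf{G}(\cloak{\mathbf{F}}^{-1})$ with $\mathbf{G}(\mathbf{I})=0$, whose unique solution subject to the Cauchy data on $\partial_{\mathrm{o}}\mathcal{C}$ is $\cloak{\mathbf{F}}^{-1}\equiv\mathbf{I}$ (by Holmgren's theorem if the cloaking map is assumed analytic, or by a Carleman/energy estimate in the smooth case).

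The main obstacle is precisely the coupling between the positive-definite $\mathsf{a}$-quadratics in \eqref{Equation1}--\eqref{Equation3} and the indefinite terms linear in the derivatives $\mathsf{F}_{ijk}$. In the cylindrical, spherical and spheroidal cases this coupling disappeared because the imposed symmetry of the cloaking map made the ``cross'' components of $\nabla\cloak{\mathbf{F}}^{-1}$ vanish (for a cylindrical cloak only $\mathsf{F}_{ijk}$ with $i,j,k\in\{1,2\}$ survive), leaving the purely quadratic Eq.~\eqref{Equation1}, which is definite. For a general cavity no such symmetry is available, the $\beta$-terms are genuinely present, and one must rule out the possibility that differentiation together with the curl-free constraint still leaves enough freedom in the $\mathsf{F}_{ijk}$ to cancel the quadratic terms throughout the cloak — a rigidity property of this specific nonlinear overdetermined system that we have not been able to establish in general, which is why the statement is recorded here only as a conjecture. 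A promising intermediate step would be to integrate suitable combinations of \eqref{Equation1}--\eqref{Equation9} over $\mathcal{C}$, move the derivatives in the $\mathsf{F}$-terms onto the $\mathsf{a}$-coefficients by parts, and use the vanishing of $\cloak{\mathbf{F}}^{-1}-\mathbf{I}$ and $\nabla\cloak{\mathbf{F}}^{-1}$ on $\partial_{\mathrm{o}}\mathcal{C}$ together with the continuity of the Toupin traction \eqref{Traction-Toupin} across the inner boundary of the cloak to obtain a coercive estimate for $\mathbf{D}$; controlling the inner-boundary contribution appears to be the key remaining difficulty.
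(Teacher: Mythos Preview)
The statement you are addressing is recorded in the paper as a \emph{conjecture}, not a theorem: the paper offers no proof whatsoever. After establishing the nine constraints \eqref{Equation1}--\eqref{Equation9} and proving the special cases of cylindrical, spherical, and spheroidal cavities, the authors simply write ``We suspect that transformation cloaking in dimension three is not possible for a cavity of any shape'' and state the conjecture. There is therefore no ``paper's own proof'' to compare against.

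Your proposal is consistent with this status: it is explicitly a strategy sketch, and you yourself flag that the central step --- controlling the sign-indefinite $\beta$-terms in \eqref{Equation1}--\eqref{Equation9} for a general cloaking map --- is ``a rigidity property \dots\ that we have not been able to establish in general, which is why the statement is recorded here only as a conjecture.'' That is an accurate reading of the paper. Your handling of the $\beta=0$ subcase is correct and slightly more explicit than what the paper records (the paper defers that case to \citep{Yavari2019}); your diagnosis of why the symmetric cases collapse (the relevant $\mathsf{F}_{ijk}$ vanish, leaving only the positive-definite quadratic) is also correct and matches the mechanism in \S4.3--\S4.5. The two avenues you propose for $\beta\neq 0$ --- an algebraic range/kernel analysis of the linear operator on $\nabla\cloak{\mathbf{F}}^{-1}$, or an integrated energy/Carleman estimate using the boundary data on $\partial_{\mathrm{o}}\mathcal{C}$ --- are reasonable lines of attack, but as you note, neither is carried to completion, and the inner-boundary contribution in the integration-by-parts approach is genuinely uncontrolled. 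So your write-up is an honest and well-informed discussion of a conjecture, not a proof, and should be presented as such.
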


\section{Conclusions}

In this paper we investigated the possibility of transformation cloaking in non-centrosymmetric gradient solids. There have been claims in the literature that chirality can be utilized in achieving cloaking from stress waves. We formulated the transformation cloaking problem in terms of two equivalent boundary-value problems. We showed that transformation cloaking is not possible for any cylindrical hole (non necessarily circular). The obstruction to transformation cloaking is the balance of angular momentum. We were able to prove this negative result for holes with the topology of the $2$-sphere only for spheroidal holes and cloaking maps that preserve the spheroidal symmetry. We conjecture that exact transformation cloaking is not possible for a hole of any shape. Some of the existing works in the literature show approximate cloaking for some particular examples. They are, however, misleading as they are i) based on fundamentally flawed formulations that do not consider all the balance laws, and ii) one has no control over the errors. Our conclusion is that the path forward for engineering applications of elastodynamic cloaking is approximate cloaking formulated as an optimal design problem.

\section*{Acknowledgement}

This research was supported by ARO W911NF-18-1-0003 (Dr. Daniel P. Cole).

\bibliographystyle{abbrvnat}
\bibliography{ref}




\end{document}